\documentclass[11pt,letterpaper]{article}

\usepackage[letterpaper,margin=1in]{geometry}

\usepackage[noend]{algorithmic,algorithm}

\usepackage{amsmath}
\usepackage{amssymb}
\usepackage{mathtools}

\usepackage{dsfont}

\usepackage[dvipsnames]{xcolor}

\newcommand{\comment}[1]{}

\usepackage{amsthm}


\newenvironment{proofsketch}{\begin{proof}[Proof Sketch]}{\end{proof}}

\newcommand{\E}{\mathbb{E}}

\newcommand{\MOPH}{\mathcal{MPH}}

\newcommand{\XOS}{\mathcal{XOS}}
\newcommand{\nonnegR}{\mathbb{R}^+}

\newcommand{\alg}{\mathcal{A}}

\newtheorem{theorem}{Theorem}[section]
\newtheorem{lemma}[theorem]{Lemma}

\newtheorem{corollary}[theorem]{Corollary}

\newtheorem{definition}{Definition}[section]

\floatstyle{ruled}
\newfloat{algorithm}{tbp}{loa}
\floatname{algorithm}{Algorithm}

\newcommand{\reals}{\mathbb{R}}
\newcommand{\Omit}[1]{}

\begin{document}


\newcommand{\AutoAdjust}[3]{\mathchoice{ \left #1 #2  \right #3}{#1 #2 #3}{#1 #2 #3}{#1 #2 #3} }
\newcommand{\Xcomment}[1]{{}}
\newcommand{\eval}[1]{\left.#1\vphantom{\big|}\right|}
\newcommand{\inteval}[1]{\Big[#1\Big]}
\newcommand{\InParentheses}[1]{\AutoAdjust{(}{#1}{)}}
\newcommand{\InBrackets}[1]{\AutoAdjust{[}{#1}{]}}
\newcommand{\Ex}[2][]{\operatorname{\mathbf E}_{#1}\InBrackets{#2\vphantom{E_{F}}}}
\newcommand{\Exlong}[2][]{\operatornamewithlimits{\mathbf E}\limits_{#1}\InBrackets{#2\vphantom{\operatornamewithlimits{\mathbf E}\limits_{#1}}}}
\newcommand{\Prx}[2][]{\operatorname{\mathbf{Pr}}_{#1}\InBrackets{#2}}
\newcommand{\Prlong}[2][]{\operatornamewithlimits{\mathbf Pr}\limits_{#1}\InBrackets{#2\vphantom{\operatornamewithlimits{\mathbf Pr}\limits_{#1}}}}
\def\prob{\Prx}
\def\expect{\Ex}
\newcommand{\super}[1]{^{(#1)}}
\newcommand{\dd}{\mathrm{d}}  
\newcommand{\given}{\;\mid\;}

\newcommand{\Ell}{\ensuremath{\mathcal{L}}}

\newcommand{\be}{\begin{equation}}
\newcommand{\ee}{\end{equation}}
\newcommand{\argmin}{\mathop{\rm argmin}}
\newcommand{\argmax}{\mathop{\rm argmax}}
\newcommand{\vr}[1]{{\mathbf{#1}}}
\newcommand{\bydef}{\stackrel{\bigtriangleup}{=}}
\newcommand{\eps}{\varepsilon}
\newcommand{\mm}[1]{\mathrm{#1}}
\newcommand{\mc}[1]{\mathcal{#1}}
\newcommand{\mb}[1]{\mathbf{#1}}
\newcommand{\vect}[1]{\ensuremath{\mathbf{#1}}}
\newcommand{\R}{\mathbb{R}}

\def \EE   {{\mathbb E}}
\def \OPT {\mathcal{OPT}}
\def \vf  {\textrm{vf}}
\def \dvf {\varphi}
\def \utility {u}
\def \reals {{\mathbb R}}

\newcommand{\poly} {\text{\textnormal{POLY}}}
\newcommand{\MPH}{\text{MPH}}
\newcommand{\MPHk}[1][k]{\MPH-$#1$ }

\newcommand{\Idr}[1]{\mathds{1}\InBrackets{#1\vphantom{\sum}}}


\newcommand{\dist}{\mathcal{F}}
\newcommand{\disti}[1][i]{{\mathcal{F}_{#1}}}
\newcommand{\distsmi}[1][i]{\dists_{\text{-}#1}}
\newcommand{\dists}{\vect{\dist}}
\newcommand{\distw}{\widetilde{\mathcal{F}}}

\newcommand{\valdist}{\mathcal{F}}
\newcommand{\valdists}{\vect{\valdist}}
\newcommand{\valdisti}[1][i]{{\valdist_{#1}}}
\newcommand{\valdistsmi}[1][i]{\valdists_{\text{-}#1}}

\newcommand{\dens}{f}
\newcommand{\denss}{\vect{ \dens}}
\newcommand{\densi}[1][i]{{\dens_{#1}}}

\newcommand{\agents}{N}
\newcommand{\nagent}{n}
\newcommand{\goods}{M}
\newcommand{\nitem}{m}
\newcommand{\auction}{A}

\newcommand{\weight}{w}
\newcommand{\partition}{\Gamma}
\newcommand{\partitioni}{\Gamma_i}

\newcommand{\CWE}[0]{\textsf{CWE}}
\newcommand{\ef}[0]{envy free}
\newcommand{\EF}[0]{EF}
\newcommand{\CWEWOMC}[0]{CWE without market clearance}
\newcommand{\SW}[0]{\textsf{SW}}
\newcommand{\Revenue}[0]{\textsf{Rev}}

\newcommand{\bid}{b}
\newcommand{\bids}{\vect{\bid}}
\newcommand{\bidsmi}[1][i]{\bids_{\text{-}#1}}
\newcommand{\bidi}[1][i]{{\bid_{#1}}}

\newcommand{\val}{v}
\newcommand{\vals}{\vect{\val}}
\newcommand{\valsmi}[1][i]{\vals_{\text{-}#1}}
\newcommand{\vali}[1][i]{{\val_{#1}}}
\newcommand{\valith}[1][i]{{\val_{(#1)}}}

\newcommand{\wal}{\widetilde{v}}
\newcommand{\wals}{\vect{\wal}}
\newcommand{\walsmi}[1][i]{\wals_{\text{-}#1}}
\newcommand{\wali}[1][i]{{\wal_{#1}}}

\newcommand{\util}{u}
\newcommand{\utils}{\vect{\util}}
\newcommand{\utili}[1][i]{\util_{#1}}
\newcommand{\utilsmi}[1]{\utils_{\text{-}#1}}

\newcommand{\price}{p}
\newcommand{\prices}{\vect{\price}}
\newcommand{\pricei}[1][i]{{\price_{#1}}}

\newcommand{\pricep}{p^{\prime}}
\newcommand{\pricesp}{\vect{\price}^{\prime}}
\newcommand{\priceip}[1][i]{{\price^{\prime}_{#1}}}

\newcommand{\qprice}{q}
\newcommand{\qprices}{\vect{\qprice}}
\newcommand{\qpricei}[1][i]{{\qprice_{#1}}}

\newcommand{\payment}{\rho}
\newcommand{\payments}{\vect{\payment}}
\newcommand{\paymenti}[1][i]{{\payment_{#1}}}

\newcommand{\type}{t}
\newcommand{\types}{\vect{\type}}
\newcommand{\typei}[1][i]{{\type_{#1}}}
\newcommand{\typesmi}[1][i]{\types_{\text{-}#1}}

\newcommand{\alloc}{X}
\newcommand{\allocs}{\vect{\alloc}}
\newcommand{\allocsmi}[1][i]{\allocs_{\text{-}#1}}
\newcommand{\alloci}[1][i]{{\alloc_{#1}}}

\newcommand{\opt}{\text{OPT}}
\newcommand{\opts}{\vect{ \opt}}
\newcommand{\opti}[1][i]{{\opt_{#1}}}
\newcommand{\optsmi}[1][i]{\opts_{\text{-}#1}}

\newcommand{\talloc}{Y}
\newcommand{\tallocs}{\vect{\talloc}}
\newcommand{\talloci}[1][i]{{\talloc_{#1}}}
\newcommand{\tallocsmi}[1][i]{\tallocs_{\text{-}#1}}

\newcommand{\rank}{r}
\newcommand{\ranks}{\vect{\rank}}
\newcommand{\ranki}[1][i]{{\rank_{#1}}}
\newcommand{\ranksmi}[1][i]{\ranks_{\text{-}#1}}

\newcommand{\crit}{\theta}
\newcommand{\crits}{\vect{\crit}}
\newcommand{\criti}[1][i]{{\crit_{#1}}}
\newcommand{\critsmi}[1][i]{\crits_{\text{-}#1}}

\newcommand{\decl}{d}
\newcommand{\decls}{\vect{\decl}}
\newcommand{\decli}[1][i]{{\decl_{#1}}}
\newcommand{\declsmi}[1][i]{\decls_{\text{-}#1}}

\newcommand{\demand}{D}
\newcommand{\demands}{\vect{\demand}}
\newcommand{\demandi}[1][i]{{\demand_{#1}}}
\newcommand{\demandsmi}[1][i]{\demands_{\text{-}#1}}

\newcommand{\sold}{{\texttt{SOLD}}}
\newcommand{\soldi}[1][i]{\sold_{#1}}

\newcommand{\CWEalg}{{\sc CWE} algorithm}

\newcommand{\Pool}{\text{Pool}}
\newcommand{\Pop}{\textbf{Pop}}
\newcommand{\Push}{\textbf{Push}}
\newcommand{\Bundle}{\textbf{Bundle}}
\newcommand{\ResolveConflict}{\textbf{ResolveConflict}}


\newcommand{\Reject}{\text{Reject}}
\newcommand{\AllocateDemand}{\textbf{AllocateDemand}}
\newcommand{\RaisePrices}{\textbf{RaisePrices}}

\title{Combinatorial Auctions via Posted Prices}

\author{
Michal Feldman\thanks{Tel-Aviv University; {\tt mfeldman@tau.ac.il}}
\and
Nick Gravin\thanks{Microsoft Research; {\tt ngravin@microsoft.com}}
\and
Brendan Lucier\thanks{Microsoft Research; {\tt brlucier@microsoft.com}}
}
\date{}

\maketitle

\begin{abstract}
%

We study anonymous posted price mechanisms for combinatorial auctions in a Bayesian framework.
In a posted price mechanism, item prices are posted, then the consumers approach the seller sequentially in an arbitrary order, each purchasing her favorite bundle from among the unsold items at the posted prices.
These mechanisms are simple, transparent and trivially dominant strategy incentive compatible (DSIC).

We show that when agent preferences are fractionally subadditive (which includes all submodular functions), there always exist prices that, in expectation, obtain at least half of the optimal welfare. Our result is constructive: given black-box access to a combinatorial auction algorithm $A$, sample access to the prior distribution, and appropriate query access to the sampled valuations, one can compute, in polytime, prices that guarantee at least half of the expected welfare of $A$. As a corollary, we obtain the first polytime (in $n$ and $m$) constant-factor DSIC mechanism for Bayesian submodular combinatorial auctions, given access to demand query oracles. Our results also extend to valuations with complements, where the approximation factor degrades linearly with the level of complementarity.

\end{abstract}


\section{Introduction}
The canonical problem in market design is to efficiently allocate a set of $m$ resources among a set of $n$ self-interested agents.  Such allocation problems range in scope from the trade of a single item between a seller and a buyer, to combinatorial auctions in which many heterogeneous goods are to be divided among multiple participants with complex and idiosyncratic preferences.  Scenarios of the latter type have attracted significant recent attention from the computer science community, due to algorithmic challenges presented by the underlying allocation problem.  For example, the efficient allocation of cloud resources involves the scheduling of computing tasks, and the allocation of wireless spectrum involves finding large independent sets in graphs that represent interference constraints.  The primary challenge in AGT is to marry algorithmic solutions to such problems with the economic principles that underpin market design.


As an example, take the problem of designing an incentive compatible mechanism for combinatorial auctions with submodular bidders.  The underlying optimization problem is NP-hard, and simple greedy methods achieve a constant approximation.  On the other hand, whether there is a truthful polytime constant-factor approximation has remained a vexing and major open question for over a decade\footnote{For some models of valuation access, such as the value query model, it is in fact known that no sub-polynomial worst-case approximation is possible~\cite{DughmiV11}.}.
One might therefore wonder whether it is even feasible to implement a combinatorial auction mechanism that is both computationally tractable and economically appealing.

When designing an economically viable mechanism, it is desirable to not only respect incentives and computational constraints, but also to resolve allocation decisions in a straightforward and transparent fashion.  For example, a simple and natural approach is to resolve a market using {\em posted prices}.
One might imagine an implementation in which items are assigned {\em anonymous} prices (i.e., all agents face the exact same prices), then agents arrive to the market and each consumes his most-desired bundle under the given prices.
Such a methodology exhibits many desirable properties: it is simple, decentralized, and easy to implement.
It is also transparent in the sense that a buyer does not need to understand the price-setting method in order to understand how to participate --- he simply behaves as a price-taker and consumes his preferred bundle.
Therefore, it is also trivially incentive compatible.

Posted price mechanisms are highly applicable when markets are large and the aggregate demands of buyers can be accurately predicted.
For instance, if the buyers' valuations are public knowledge and satisfy the gross substitutes condition, then there always exist prices that efficiently clear the market \cite{Gul1999,Kelso1982}.
Similar results hold in large markets for arbitrary valuations \cite{Azevedo2013}.
However, if valuations are not fully known to the seller and are unpredictable, as in the submodular combinatorial auction problem described above, then it is unclear how reasonable prices can be set.
This motivates the approach of implementing an optimization algorithm as an {\em auction}, where buyers submit competing bids that are treated as input.
As Milgrom writes, ``When goods are not standardized or when the market clearing prices are highly unstable, posted prices work poorly, and auctions are usually preferred" \cite{Milgrom89}.

We are left with a dichotomy.
Posted prices form a simple and natural market instrument, but they generate efficient outcomes only under special circumstances and full information over the buyers' preferences.
General auctions are more widely applicable, but can be significantly more complex to execute and participate in; moreover, we still do not have reasonable auction designs for many allocation problems of interest.
A natural question arises:

\begin{quote}
To what extent can \emph{approximately} efficient outcomes be implemented using anonymous posted prices in settings of incomplete information?
\end{quote}

In this work we study the power of anonymous posted prices in Bayesian settings, where the designer knows the distribution over the agents' valuations but not their realizations.
The Bayesian setting imposes additional challenges over the full information setting typical in the market equilibrium literature.
In particular, readers familiar with this literature will note that, in this probabilistic setting, there do not exist prices that are guaranteed to satisfy all agent demands simultaneously.
To mitigate this problem, we consider posted price mechanisms that admit the agents sequentially, in an arbitrary order, to select their most preferred bundle from the remaining items.
%
%

We devise polytime posted price mechanisms, for several classes of valuations, that achieve nearly optimal social welfare given appropriate access to the distribution of the agents' valuations.
Specifically, for XOS valuations (a strict superset of submodular valuations) we devise a mechanism that obtains a constant fraction of the optimal social welfare.
Notably, this implies that we can obtain an $O(1)$-approximate dominant strategy incentive compatible mechanism for Bayesian XOS valuations, whose running time is polynomial in $n$ and $m$ given access to demand queries.
Prior to this work, the best-known polytime mechanisms (posted-price or otherwise) were either polylogarithmic approximations \cite{Dobzinski07} or had runtimes that were polynomial in the support size of an agent's valuation distribution, which could be exponential in $n$ and $m$ \cite{BH-11,HartlineKM11}.
In addition, for general monotone valuations, we devise a mechanism that obtains an approximation factor that degrades gracefully with the level of complementarity of the functions, as captured by the maximum-over-hypergraph (MPH) hierarchy, recently introduced in \cite{Feige14}.
A more detailed description of our results, along with a comparison to the related literature, appears below.

\subsection{Our model}

Our setting consists of a set $\goods$ of $\nitem$ indivisible objects and a set of $n$ buyers.
Each buyer has a valuation function $\vali(\cdot) : 2^\goods \to \reals_{\geq 0}$ that indicates his value for
every set of objects.  We assume valuations are monotone non-decreasing, normalized so that $\vali(\emptyset) = 0$, and scaled to lie in $[0,1]$.
The profile of buyer valuations is denoted by $\vals=(\val_1,\dotsc,\val_n)$.

An {\em allocation} of $M$ is a vector of sets $\allocs = (\alloc_1, \dotsc, \alloc_n)$, where $\alloci$ denotes the bundle assigned to buyer $i$, for every buyer $i \in [n]$, and $\alloc_i \cap \alloc_k = \emptyset$ for every $i\neq k$ (note that it is not required that all items are allocated).
The social welfare of an allocation $\allocs$ is $\SW(\allocs)=\sum_{i=1}^{n}\vali(\alloci)$, and the optimal welfare is denoted by OPT.

The utility of buyer $i$ being allocated bundle $\alloc_i$ under prices $\prices$ is $\utili(\alloci, \prices) = \vali(\alloci) - \sum_{j \in \alloci}\price_j.$ Given prices $\prices=(\price_1, \dotsc, \price_{m})$, the {\em demand correspondence} $\demandi(M, \prices)$ of buyer $i$ contains the sets of objects that maximize buyer $i$'s utility. 


We consider a {\em Bayesian} setting, where the bidders' valuations are drawn independently from distributions $\valdist_1, \ldots, \valdist_n$.  Write $\valdists = \valdist_1 \times \cdots \times \valdist_n$, so that $\vals$ is drawn from $\valdists$.  We think of $\valdists$ as being public knowledge, whereas the realization $\vali$ is known only to agent $i$.
In the Bayesian framework, an allocation $\allocs$ is said to be an $\alpha$-approximation (for social welfare) if
$$\Ex[\vals \sim \dists]{\SW(\allocs(\vals))} \ge (1/\alpha) \cdot \Ex[\vals \sim \dists]{\opt(\vals)}.$$

\subsection{Anonymous Posted Price Mechanisms}

The posted price mechanisms considered in this paper proceed in the following steps:
\begin{enumerate}
\setlength{\itemsep}{1pt}
\setlength{\parskip}{0pt}
\setlength{\parsep}{0pt}
\item (pricing phase) a price vector $\prices$ is determined, based on $\dists$;
\item (arrival order phase) an arrival order $\pi$ is determined;
\item (value realization phase) A value $\vali \sim \disti$ is realized for every buyer $i$, known only to buyer $i$;
\item (consumption phase) The buyers arrive to the market according to the order $\pi$, and each buyer receives his most desired bundle among all remaining items. That is, for every buyer $i$, $\alloci \in \demandi(\prices, M \setminus \bigcup_{j <_{\pi} i} \alloci[j])$. Buyer $i$ pays the sum of item prices in his bundle $\alloci$.
\end{enumerate}

We refer to these as anonymous posted price mechanisms, though sometimes we will leave out ``anonymous'' for brevity.
We note that the mechanism can be described in two equivalent ways, namely as an indirect or a direct mechanism.
In both versions, the mechanism sets the prices and determines\footnote{All of our results hold under arbitrary arrival orders, so one can also think of an adversary as choosing the order.} the arrival order.
The difference is that in the indirect implementation, the agents arrive and purchase their desired items in the determined order, whereas in the direct revelation implementation, the buyers report their valuations, and the mechanism simulates the consumption phase on their behalf.
This latter simulation requires that the mechanism have access to demand oracles\footnote{
While answering a demand query might be NP-hard in some cases, in the context of combinatorial auctions it is natural to expect the agent to be able to answer a demand query.  Otherwise it would be unreasonable to expect any mechanism to satisfy the agents' demands. Furthermore, in some cases there are polytime algorithms for answering demand queries. For example, for gross substitutes valuations demand queries can be implemented with a polynomial number of value queries~\cite{Renato13}.
}.

It is easy to verify that any posted price mechanism that adheres to this structure is trivially dominant strategy incentive compatible (DSIC).
Moreover, it is clearly {\em weakly group strategyproof}, meaning that no coalition of agents can deviate in a way that strictly benefits each one of them.



All of our results hold regardless of the order selected in the arrival order phase, so we can think of the arrival order as being chosen adversarially.  In the direct revelation implementation it would be natural to imbue the mechanism with the power to choose the ordering, but we view obliviousness to the order as increasing the robustness of the mechanism.

Beyond the strong incentive characteristics of posted price mechanisms, they are appealing in their simplicity.
In particular, they proceed by setting a single price vector and using it for all agents.


\subsection{Our results}

We establish welfare guarantees for XOS (i.e., fractionally subadditive) valuations (a strict superset of submodular valuations), and for MPH-$k$ valuations --- a hierarchy that spans all monotone valuations, parameterized by the complementarity level $k$.
%
%
%
Our main contribution is the following:

\vspace{0.15in}\noindent \textbf{Theorem:}
{\bf [2-approximation for XOS]}
Given black-box access to an algorithm $\alg$ for XOS valuations, sample access to $\XOS$ distributions $\dists$, an $\XOS$ query oracle for the valuations in the support of $\dists$, and a demand oracle for the valuations, there exists a posted price mechanism that, for every XOS valuation profile $\vals$, and every $\eps$, returns an outcome that gives expected social welfare of at least $\frac{1}{2}\E_{\vals \sim \dists}[\alg(\vals)]-\eps$, and runs in time $\poly(n,m,1/\eps)$.

\vspace{0.2in}We note that the factor of 2 is tight (even for a single item), as established by the illustrating example in the end of this section.
We also note that the demand oracle requirement is needed only for simulating agent behavior in the consumption phase.
This theorem implies a set of results for XOS valuations and special cases thereof, including submodular and gross substitutes valuations.
Before presenting the corollaries, we wish to emphasize the strengths of this result in light of the previous literature:
\begin{itemize}
\setlength{\itemsep}{1pt}
\setlength{\parskip}{0pt}
\setlength{\parsep}{0pt}
\item {\bf (Truly) polytime:} Many previous mechanisms have (essentially) pseudo-polynomial runtime, in the sense that they are polynomial in the size of the agents' type space, which may plausibly be exponential in $m$ and $n$.
The runtime of our mechanisms is polynomial in $m$ and $n$, independent of the type space sizes.
\item {\bf DSIC:} Many existing mechanisms for settings of incomplete information exhibit the weaker notion of Bayesian incentive compatibility
(BIC). Our mechanism exhibits the stronger notion of DSIC, and is also weakly group strategyproof.
\end{itemize}


The heart of the construction of our mechanism lies in the appropriate price assignment; this is the part where oracle access is needed (except for the consumption phase that requires access to demand oracles).
Together with known algorithmic results, our theorem implies the following polytime approximation results for XOS, submodular and gross substitutes valuations.

\vspace{0.15in}\noindent \textbf{Theorem:}
{\bf [Computational results]}
We devise polytime (in $n$ and $m$) DSIC mechanisms with the following guarantees:
\begin{itemize}
\setlength{\itemsep}{1pt}
\setlength{\parskip}{0pt}
\setlength{\parsep}{0pt}
\item {\bf XOS:} A $\frac{e}{2(e-1)}$-approximation, 
given sample access to the distribution, value and demand oracles, and XOS query oracles.
\item {\bf Submodular:} A $\frac{e}{2(e-1)}$-approximation, 
given sample access to the distribution, value oracles, and demand oracles.
Demand oracles are required only in the consumption phase.
\item {\bf Gross substitutes (GS):} A $\frac{1}{2}$-approximation, 
given only sample access to the distribution and value oracles.
\end{itemize}

So far we have only considered complement-free valuations.
Our second main result concerns valuation functions that exhibit complementarities.

\vspace{0.15in}\noindent \textbf{Theorem:}
{\bf [$k$-approximation for $\MPHk$]}
Given a black-box access to an algorithm $\alg$ for $\MPHk$ valuations, a sample access to $\MPHk$ distributions $\dists$, an $\MPHk$ query oracle for the valuations in the support of $\dists$, and a demand oracle for the valuations, there exists a posted price mechanism that, for every $\MPHk$ valuation profile $\vals$, and every $\eps$, returns an outcome that gives expected social welfare of at least $\frac{1}{k}\E_{\vals \sim \dists}[\alg(\vals)]-\eps$, and runs in time $\poly(n,m^k,1/\eps)$.

\vspace{0.15in}By the fact that the class of $\MPHk[m]$ valuations (i.e., $k=m$) is equivalent to the class of all monotone functions, the last theorem implies the existence of posted prices that gives expected social welfare of at least $\frac{1}{\Omega(m)}$ fraction of $\opt$. We also show that this bound is tight.

\paragraph{An Illustrating Example. }
To give some insight into our results, consider the case of a single item and $n$ bidders with values drawn i.i.d. from some distribution $F$.  In this setting, the Vickrey auction generates the efficient outcome.  How well can one approximate the efficient outcome by setting a single price $p$ (that depends only on $F$ and $n$) and allocating to a random bidder\footnote{Since agents are iid, this is equivalent to an adversarial order that doesn't depend on the value realizations.} with value greater than $p$, if any exists?

It is known that one cannot hope to achieve better than half of the optimal expected welfare~\cite{HK92}.  For instance, suppose $F$ is such that each agent has (large) value $X$ with probability $q = 1 - (1-1/X)^{1/n}$ and value $1$ otherwise.  Then the probability that \emph{any} agent has value $X$ is $1/X$, and hence the expected optimal welfare is $1 \cdot (1-1/X) + X \cdot (1/X)$, which approaches $2$ as $X$ grows large.  On the other hand, no posted price obtains welfare greater than $1$: if $p > 1$, it generates welfare $X$ with probability $1/X$; whereas if $p \leq 1$, an arbitrary agent will buy and the expected welfare is $1 + Xq = 1+O(\frac{1}{n})$.  A posted price therefore cannot extract more than half of $\opt$.

However, there is a simple pricing scheme for a single item that yields half of the optimal social welfare; such methods are known from the prophet inequality literature~\cite{KW12}.  Specifically, set price $\price$ equal to half of the expected highest value.  To see why this works, write $\pi$ for the probability that the item is sold, and write $i^*$ for the agent with the highest value for the item (note $i^*$ is a random variable). The expected revenue from the auction is precisely $\price \cdot \pi = \frac{1}{2}\E[v_{i^*}] \cdot \pi$. On the other hand, the probability that nobody buys the item ahead of buyer $i^*$ is at least $1-\pi$. The expected surplus (value minus payments) of buyer $i^*$, conditioned on item being still available, is at least $\E[v_{i^*}] - \price = \frac{1}{2}\E[v_{i^*}]$. Putting this together, we have that the expected welfare of the auction, which equals the expected revenue plus the expected buyer surplus, is at least $\frac{1}{2}\E[v_{i^*}] \cdot \pi +\frac{1}{2}\E[v_{i^*}] \cdot (1 - \pi) = \frac{1}{2}\E[v_{i^*}]$, as claimed.

Our main result shows that the reasoning given in the single-item example can be extended to markets with multiple heterogeneous items for sale and asymmetric buyers, as long as buyer preferences lie in the class of XOS valuations.

\subsection{Related work}


Our work is part of the recent body of literature on simple, non-optimal mechanisms \cite{HartlineR09, DhangwatnotaiRY10}.
The design and performance of simple mechanisms is an active field of research, motivated by the observation that in practice, designers are often willing to trade truthfulness or optimality for simplicity.
Canonical examples include the generalized second price (GSP) auctions for online advertising \cite{Edelman05,Varian07}, and the ascending price auction for electromagnetic spectrum allocation \cite{Milgrom98}.

A particularly relevant example of a simple mechanism is the {\em simultaneous item} auction, in which agents make simultaneous, separate bids on multiple items.  
Such auctions are not truthful, but achieve, at equilibrium, a constant approximation to the optimal welfare when agents have complement-free valuations
\cite{ Bhawalkar12, CKS08, FFGL13, HKMN11}.  A conclusion from this is that by restricting bidders to bid only on individual items, rather than on packages as in the VCG auction, one loses only a constant factor from the optimal welfare.  Our results show that it is possible to go one step further, in terms of simplicity.
Indeed, posted price mechanisms not only handle items separately but also forego competition entirely and simply publish prices on individual items, and simultaneously exhibit strong incentive compatibility.

There is a long line of research studying the performance of posted price mechanisms under the objective of maximizing revenue.  When there is only a single item for sale, posted prices obtain $78\%$ of the optimal revenue in large markets \cite{BlumrosenH08}.  When agents have unit-demand preferences, a form of posted-price mechanism extracts a constant fraction of the optimal revenue \cite{Chawla07, Chawla10, ChawlaMS10}.  When there is only a single item for sale, this constant factor persists even when the distributions are unknown, as long as they are MHR \cite{Babaioff11}.
Notably, these works all apply a more relaxed notion of a posted-price mechanism, in which one can set different prices for each customer.  In contrast, the mechanisms we consider use non-discriminatory pricing.

Our work relates to the design of truthful submodular combinatorial auctions.  Given access to demand queries, a randomized truthful $O(\log m \log\log m)$ worst-case approximation exists \cite{Dobzinski07}.  It is a major open question whether there is a truthful constant-factor mechanism, using demand queries.  It is known that no sub-polynomial factor is possible under the value query model \cite{DughmiV11}, or for succinctly-described valuations \cite{DobzinskiV12}.
We show that in Bayesian settings, where the performance of the mechanism is evaluated based on its expected social welfare given an input distribution, one can indeed design a truthful constant-factor submodular combinatorial auction.

Ours is not the first work to turn to the Bayesian setting for combinatorial auction design. Hartline and Lucier~\cite{HartlineL10}, Hartline et al.~\cite{HartlineKM11} and Bei and Huang~\cite{BH-11} provide black-box reductions that convert an arbitrary welfare-maximization algorithm into an (approximately) Bayesian incentive compatible (BIC) mechanism, without loss of welfare.  In particular, one can apply the latter two to constant-factor approximations for submodular CAs to obtain BIC constant-factor mechanisms.  However, these mechanisms require time polynomial in the support size of an agent's valuation distribution, which can be exponential in $n$ and $m$.  In contrast, our mechanism runs in time polynomial in $n$ and $m$, regardless of the valuation distributions' support sizes.  Alaei~\cite{Saeed11} presents a general method for designing DSIC combinatorial auction mechanisms in Bayesian settings, using an algorithm for a related single-agent optimization problem, but does not consider submodular CAs\footnote{While Alaei~\cite{Saeed11} does not explicitly discuss submodular CAs, our understanding is that one could use his methodology together with an algorithm for (single-agent) submodular function maximization to construct a constant-factor DSIC mechanism for the submodular CA problem.  We note that such a mechanism would not fall within the posted-price paradigm.}.

\section{Preliminaries}
\label{sec:model}

\paragraph{Valuation Classes} We study both complement-free valuations, and valuations that exhibit complementarities.
%
%
There is a standard hierarchy of complement-free valuations (see \cite{Lehmann2001}): additive $\subset$ gross substitutes $\subset$ submodular $\subset$ XOS $\subset$ subadditive.
\begin{description}
\item[\emph{additive}] $\val(S) = \sum_{j \in S}\val(\{j\})$ for all $S \subset \goods$. 
\item[\emph{submodular}] for every $S \subseteq T \subseteq M$ and $j \in M$, $\val(j|T) \leq \val(j|S)$, where $\val(j|S) \coloneqq \val(S \cup \{j\}) - \val(S)$.
\item[\emph{XOS}] there exists a collection of additive functions
$A_1(\cdot),\ldots ,A_k(\cdot)$ 
such that for every set $S \subseteq \goods$,  $\val(S) = \max_{1\le i\le k}A_i(S)$.
\item[\emph{subadditive}] 
for any
subsets $S_1,S_2\subset \goods$, $\val(S_1)+\val(S_2)\ge\val(S_1\cup S_2)$.
\end{description}

To study valuations with complements, we consider the hierarchy {\em maximum over positive hypergraphs} {\bf (MPH)}, introduced recently by \cite{Feige14}.
This hierarchy is general enough to encapsulate all monotone valuation functions, and its level captures the degree of complementarity.
We defer a formal description to Section \ref{sec:mph}.

\paragraph{Computational model}

An algorithm for the combinatorial auction problem receives as input a valuation profile $\vals$, and returns an allocation profile.  We write $\alg$ for an algorithm, and $\alg(\vals)$ for the allocation returned.
%
%
As any explicit description of $\vali: 2^\goods\to \R_{\ge 0}$ would have size exponential in $m$, it is usually assumed that there is an oracle access to
$\vali$. We consider the following oracles:

\begin{itemize}
\setlength{\itemsep}{1pt}
\setlength{\parskip}{0pt}
\setlength{\parsep}{0pt}
\item {\bf Value oracle} takes as input a set $T$, and returns $\vali(T)$;
\item {\bf Demand oracle} takes as an input a price vector $\prices$, and returns a set from demand correspondence $\demandi(M, \prices)$, breaking ties arbitrarily but consistently; 
\item {\bf XOS oracle} (only for XOS function $\vali$) takes as input a set $T$, and returns the corresponding additive representative function for the set $T$, i.e., an additive function $A_i(\cdot)$ such that (i) $\vali(S)\ge A_i(S)$ for any $S\subset[m]$, and (ii) $\vali(T)=A_i(T)$;
\end{itemize}

While value oracle is the least computationally demanding for the buyers and the seller, the demand oracle captures the most basic decision problem a buyer faces in a market with item prices. Since the primary focus of this paper is on the pricing mechanisms, we assume throughout the paper an access to demand and value oracles for granted. We note that XOS 
oracles are less commonly used in the literature. However, for some classes of valuations XOS oracle can be implemented via polynomially many queries to value oracle, e.g., for any submodular function.

\section{Posted Prices for XOS Valuations}
\label{sec:xos}
The main theorem in this section is the following.
\begin{theorem}
\label{th:XOS_Bayesian_Comp}
Let distribution $\dists$ over $\XOS$ valuation profiles be given via a sample access to $\dists$.
Suppose that for every $\vals\sim\dists$ we have
\begin{enumerate}
\item black-box access to a welfare maximization algorithm $\alg$ for combinatorial auctions,
\item an $\XOS$ query oracle (for valuations sampled from $\dists$).
\end{enumerate}
Then, for any $\eps>0$, we can compute item prices in $\poly(m,n,1/\eps)$ time such that, for any buyer arrival order, the expected welfare of the posted price mechanism is at least $\frac{1}{2}\E_{\vals \sim \dists}[\SW(\alg(\vals))] - \eps$.
\end{theorem}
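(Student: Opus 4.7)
The plan is to generalize the single-item prophet-inequality argument from the introduction: define each item's price as half of its expected contribution to $\SW(\alg(\vals))$ under an XOS decomposition, and then lower-bound the mechanism's welfare as revenue plus consumer surplus. For each sampled profile $\vals\sim\dists$, I run $\alg$ to obtain $\alloc^*(\vals)$ and use the XOS oracle to extract additive representatives $\phi_i^{\vals}$ of $v_i$ at bundle $\alloc^*_i(\vals)$, with $\phi_i^{\vals}(\alloc^*_i) = v_i(\alloc^*_i)$ and $\phi_i^{\vals}(S)\le v_i(S)$ for all $S$; a minor normalization lets me zero out $\phi_i^{\vals}(j)$ for $j\notin\alloc^*_i$ without violating this inequality. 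I then define the ideal prices
\[
p^*_j \;=\; \tfrac{1}{2}\,\E_{\vals\sim\dists}\Bigl[\textstyle\sum_i \phi_i^{\vals}(j)\,\mathbf{1}[j\in\alloc^*_i(\vals)]\Bigr],
\]
so that $2\sum_j p^*_j = \E_{\vals}[\SW(\alg(\vals))]$.

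For the welfare analysis, I fix any arrival order, let $A_i$ be the items still available when buyer $i$ arrives, and let $\text{SOLD}$ be the items eventually sold. Since buyer $i$ chooses her favorite bundle in $A_i$, she obtains
\[
\utili \;\ge\; v_i(\alloc^*_i\cap A_i) - p^*(\alloc^*_i\cap A_i) \;\ge\; \sum_{j\in\alloc^*_i\cap A_i}\bigl(\phi_i^{\vals}(j)-p^*_j\bigr),
\]
where the second inequality uses the XOS lower bound. Summing over $i$ (each $j$ lies in at most one $\alloc^*_i$, call it $\sigma(j)$) and adding the revenue $\sum_j p^*_j\mathbf{1}[j\in\text{SOLD}]$, the total welfare is bounded below by a sum over items of local terms involving $\phi_{\sigma(j)}^{\vals}(j)$, $p^*_j$, and indicators of the events ``$j$ is available when $\sigma(j)$ arrives'' and ``$j$ is sold.'' Applying the pointwise inequality $\mathbf{1}[j\in A_{\sigma(j)}]\ge 1-\mathbf{1}[j\in\text{SOLD}]$ (if $j$ is gone before $\sigma(j)$ arrives, it must have been sold) and rearranging, I extract the clean item-by-item cancellation by which the revenue $p^*_j\Pr[j\in\text{SOLD}]$ precisely compensates the utility loss from $j$ being taken by a non-intended buyer — exactly as in the single-item example — yielding $\E[\text{welfare}]\ge \sum_j p^*_j = \tfrac12\E[\SW(\alg)]$.

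For the polynomial-time implementation, I form empirical prices $\hat p$ from $N=\poly(n,m,1/\eps)$ samples. Since every $\phi_i^{\vals}(j)\in[0,1]$ and at most one summand is nonzero per item per sample, Hoeffding gives $\max_j|\hat p_j-p^*_j|\le \eps/(nm)$ with high probability. Rerunning the welfare argument with $\hat p$ in place of $p^*$ loses at most $O(nm)\cdot\eps/(nm)=\eps$ additively. Each sample costs one invocation of $\alg$ and $n$ XOS queries; the consumption phase itself uses $n$ demand queries.

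The main obstacle is the welfare analysis. The random quantities $\phi_i^{\vals}(j)$, $\mathbf{1}[j\in\alloc^*_i]$, and $\mathbf{1}[j\in A_i]$ are all coupled through the profile $\vals$ and the random arrival dynamics, so the obvious factored lower bounds on each cannot be justified by independence. Making the item-by-item cancellation go through relies on the balanced-prices intuition behind Kleinberg--Weinberg prophet inequalities: setting $p^*_j$ to \emph{exactly} half the expected XOS contribution (not some other constant) is what lets the three positive/negative terms collapse into $\sum_j p^*_j$ despite the coupling. A secondary subtlety in the computational step is verifying that an $\eps/(nm)$ perturbation of the prices cannot shift the buyers' demand-query responses enough to degrade welfare by more than $\eps$, which follows because the Step-2 bound uses each price in at most $O(nm)$ terms.
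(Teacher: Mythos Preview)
Your price definition, sampling scheme, and overall revenue-plus-surplus decomposition match the paper exactly. The gap is in the welfare analysis, precisely at the step you flag as ``the main obstacle.'' You correctly observe that $\phi_{\sigma(j)}^{\vals}(j)$, $\mathbf{1}[j\in\alloc^*_i(\vals)]$, and $\mathbf{1}[j\in A_i]$ are coupled through $\vals$, but your claim that the choice $p^*_j=\tfrac12\E[\phi]$ makes the cancellation go through ``despite the coupling'' is false. Concretely: take one item, two i.i.d.\ buyers with $v_i\in\{0,2\}$ each with probability $1/2$, buyer~$2$ arriving first, and ties in $\alg$ broken toward buyer~$1$. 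Then $p^*=3/4$ and $\phi_{\sigma(j)}=\max(v_1,v_2)$. The event $\{\text{SOLD}\}$ coincides with $\{\max(v_1,v_2)=2\}=\{\phi_{\sigma(j)}>p^*\}$, so after your replacement $\mathbf{1}[j\in A_{\sigma(j)}]\ge 1-\mathbf{1}[\text{SOLD}]$ the surplus term becomes $\max(\phi-p^*,0)\cdot(1-\mathbf{1}[\text{SOLD}])\equiv 0$. Your bound then gives only $p^*\Pr[\text{SOLD}]=9/16$, which is strictly less than $\sum_j p^*_j=3/4$. The ``balanced prices'' intuition from Kleinberg--Weinberg does not save you here, because in the single-item prophet argument one sums over \emph{all} buyers $i$ and uses that ``available when $i$ arrives'' depends only on $v_{<i}$, independent of $v_i$; your quantity $\phi_{\sigma(j)}^{\vals}(j)\cdot\mathbf{1}[j\in\alloc^*_{\sigma(j)}(\vals)]$ depends on the entire profile $\vals$ through $\alg$, so no such independence is available.

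The paper's fix is to introduce, for each buyer $i$, an independent ghost sample $\widetilde{\vals}_{-i}\sim\dists_{-i}$ and to lower-bound $u_i(\vals)$ by the utility $i$ would obtain from the \emph{hypothetical} bundle $\alloc^*_i(v_i,\widetilde{\vals}_{-i})$ restricted to available items. The contribution term then depends only on $(v_i,\widetilde{\vals}_{-i})$ while $\text{SOLD}_i$ depends only on $\vals_{-i}$; these are genuinely independent, so the expectation factors, and after a change of variables $(v_i,\widetilde{\vals}_{-i})\mapsto\vals$ one recovers $\sum_i\E[(\SW_j(\vals)-p^*_j)\mathbf{1}[j\in\alloc^*_i(\vals)]]=p^*_j$ for each item, yielding the clean $\Pr[j\notin\text{SOLD}]\cdot p^*_j + \Pr[j\in\text{SOLD}]\cdot p^*_j = p^*_j$ cancellation you were aiming for. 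This ghost-sample decoupling is the missing idea in your proposal.
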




\paragraph{Implications.} Before proving Theorem \ref{th:XOS_Bayesian_Comp}, let us discuss some implications.  First, note that using an $\alpha$-approximation algorithm for $\alg$ in Theorem \ref{th:XOS_Bayesian_Comp} results in a posted price mechanism with approximation factor $\alpha/2$, minus an additive error term that can be made as small as desired.
Recall that implementing the consumption phase in a direct revelation mechanism does require access to demand queries; note that Theorem \ref{th:XOS_Bayesian_Comp} and its corollaries below refer specifically to the pricing phase.

If we assume access to demand oracles, then we can use the polytime algorithm of Feige~\cite{Feige09} with approximation factor $1-1/e$ as a black box.  Theorem \ref{th:XOS_Bayesian_Comp} then implies the existence of a DSIC mechanism with expected social welfare at least $\frac{e}{2(e-1)}\opt - \eps$ and runtime $\poly(m,n,1/\eps)$.

For submodular valuations, one could instead use the algorithm by Vondrak~\cite{Vondrak08} with tight approximation factor $\alpha=1-1/e$ that utilizes only value queries.  Since $\XOS$ queries can be simulated by value queries for submodular valuations~\cite{Blumrosen2009}, we obtain the following corollary:
\begin{corollary}
\label{cor:submodular-prices}
Given sample access to submodular distributions $\dists$ and value oracle access to each valuation in the support of $\dists$,
for every $\eps>0$, one can compute item prices in time $\poly(m,n,1/\eps)$, such that, for any buyer arrival order, the expected welfare of the posted price mechanism is at least $\frac{e}{2(e-1)}\E_{\vals \sim \dists}[\SW(\alg(\vals))] - \eps$.
\end{corollary}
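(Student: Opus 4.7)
The plan is to reduce Corollary~\ref{cor:submodular-prices} to Theorem~\ref{th:XOS_Bayesian_Comp} by (i) choosing the black-box algorithm $\alg$ appropriately, and (ii) showing that the XOS-query-oracle hypothesis of Theorem~\ref{th:XOS_Bayesian_Comp} can be supplied using only value queries whenever the underlying valuation is submodular. Since the class of submodular valuations is contained in the class of XOS valuations, Theorem~\ref{th:XOS_Bayesian_Comp} then applies verbatim to the instantiated setup.

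For (i), I would take $\alg$ to be Vondrak's continuous-greedy algorithm for submodular welfare maximization, which runs in $\poly(n,m)$ time using only value queries and achieves $\E_{\vals}[\SW(\alg(\vals))] \geq (1-1/e)\,\E_{\vals}[\opt(\vals)]$. For (ii), I would invoke the standard value-to-XOS simulation for submodular functions from \cite{Blumrosen2009} (whose underlying construction is due to Lehmann--Lehmann--Nisan): given a query set $T$ and a submodular $\val$, fix any ordering $T = \{j_1,\ldots,j_k\}$ and set $a_{j_\ell} \coloneqq \val(j_\ell \mid \{j_1,\ldots,j_{\ell-1}\})$ for $\ell \leq k$ and $a_j \coloneqq \val(j \mid T)$ for $j \notin T$. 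A telescoping sum gives $A(T) = \val(T)$, and submodularity yields $A(S) \leq \val(S)$ for every $S \subseteq M$, so $A$ is a valid XOS clause for $T$. Each simulated XOS query costs $O(m)$ value queries and is deterministic in $T$, so it composes cleanly with the polynomially many oracle calls made by the pricing procedure of Theorem~\ref{th:XOS_Bayesian_Comp}.

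Combining these ingredients, Theorem~\ref{th:XOS_Bayesian_Comp} outputs posted prices computable in $\poly(n,m,1/\eps)$ time from value-oracle access alone, whose expected welfare under any arrival order is at least $\tfrac{1}{2}\,\E_{\vals}[\SW(\alg(\vals))] - \eps \geq \tfrac{1}{2}(1-1/e)\,\E_{\vals}[\opt(\vals)] - \eps$, which is the guarantee asserted by the corollary (expressed there as a fraction of $\alg$'s expected welfare). I do not anticipate a real obstacle here: the only subtle point is that the simulated oracle must be usable in place of a true oracle throughout the pricing procedure, and this is immediate since every simulated response is, by construction, a deterministic and correct XOS clause.
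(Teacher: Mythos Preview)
Your approach matches the paper's exactly: instantiate Theorem~\ref{th:XOS_Bayesian_Comp} with Vondrak's value-query $(1-1/e)$-approximation and simulate the XOS oracle via the standard marginal-contribution clause from~\cite{Blumrosen2009}. One minor correction: with $a_j \coloneqq \val(j \mid T)$ for $j \notin T$ the bound $A(S) \le \val(S)$ can fail (e.g., if two items outside $T$ each have marginal $1$ over $T$ but are substitutes for one another), so the usual construction sets $a_j \coloneqq 0$ for $j \notin T$; in any case the pricing procedure of Theorem~\ref{th:XOS_Bayesian_Comp} only reads $a_j$ for $j$ inside the queried allocation, so this slip does not affect the argument.
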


As before, one can implement the mechanism from Corollary~\ref{cor:submodular-prices} as a direct revelation mechanism, if one also has access to demand oracles for the valuations.


For gross substitutes valuations,
demand queries can be implemented with a polynomial number of value queries \cite{Renato13},
and an optimal allocation can be computed in polynomial time using demand queries \cite{BN07}.
%
The following corollary follows:
\begin{corollary}
\label{cor:gs-prices}
Given sample access to gross substitutes distributions $\dists$ and value oracle access to each valuation in the support of $\dists$,
for every $\eps>0$, one can compute item prices in time $\poly(m,n,1/\eps)$, such that, for any buyer arrival order, the expected welfare of the posted price mechanism is at least $\frac{1}{2}\E_{\vals \sim \dists}[\SW(\alg(\vals))] - \eps$.
\end{corollary}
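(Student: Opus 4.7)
The plan is to derive Corollary~\ref{cor:gs-prices} as an immediate specialization of Theorem~\ref{th:XOS_Bayesian_Comp}, by showing that for gross-substitutes (GS) valuations each of the oracle ingredients required by that theorem --- a welfare-maximizing black-box algorithm $\alg$, an XOS query oracle, and a demand oracle --- can be simulated in polynomial time using only the value oracle assumed in the corollary.

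First I would note the standard inclusion GS $\subset$ submodular $\subset$ XOS, so Theorem~\ref{th:XOS_Bayesian_Comp} is applicable whenever the required oracles are available. The three ingredients are then assembled as follows. (1) \emph{Welfare-maximizing algorithm $\alg$.} For GS valuations an exact welfare-optimal allocation can be computed in polynomial time via a market-clearing procedure using demand queries~\cite{BN07}, so we may take $\alg$ to be this exact algorithm; in particular $\SW(\alg(\vals)) = \opt(\vals)$. (2) \emph{Demand oracle.} By Paes Leme~\cite{Renato13}, any demand query on a GS valuation can be answered using $\poly(n,m)$ value queries. (3) \emph{XOS oracle at a bundle $T$.} Since GS is a subclass of submodular, I would use the standard telescoping construction: fix any ordering $j_1,\ldots,j_k$ of $T$ and set $a(j_\ell) := v(\{j_1,\ldots,j_\ell\}) - v(\{j_1,\ldots,j_{\ell-1}\})$ for $\ell \in [k]$, with $a(j):=0$ for $j\notin T$. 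Telescoping gives $a(T)=v(T)$, and submodularity gives $a(S)\le v(S)$ for every $S\subseteq M$~\cite{Blumrosen2009}; this requires at most $|T|\le m$ value queries.

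Feeding these three simulated oracles into Theorem~\ref{th:XOS_Bayesian_Comp} would then yield item prices, computable in $\poly(n,m,1/\eps)$ time, whose posted-price mechanism achieves expected welfare at least $\tfrac{1}{2}\E_{\vals}[\SW(\alg(\vals))] - \eps$, which is exactly the bound claimed by the corollary. The only step requiring care is runtime bookkeeping, and I expect this to be the main (albeit mild) obstacle: Theorem~\ref{th:XOS_Bayesian_Comp} invokes $\alg$ and the XOS oracle only polynomially often on sampled valuations, each XOS or demand call reduces to $\poly(n,m)$ value queries, and $\alg$ itself makes $\poly(n,m)$ demand queries; the nested composition thus remains in $\poly(n,m,1/\eps)$. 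Beyond this bookkeeping no new ideas are required.
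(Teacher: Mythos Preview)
Your proposal is correct and follows exactly the paper's approach: the corollary is derived from Theorem~\ref{th:XOS_Bayesian_Comp} by noting that for GS valuations demand queries reduce to polynomially many value queries~\cite{Renato13} and optimal allocations are polytime-computable via demand queries~\cite{BN07}, with the XOS oracle handled (implicitly, via the submodular corollary immediately preceding it) by the same telescoping construction you spell out~\cite{Blumrosen2009}. You are simply more explicit than the paper about the XOS-oracle step and the runtime bookkeeping.
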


Here, the mechanism from Corollary~\ref{cor:submodular-prices} can be implemented as a direct revelation mechanism, using only value queries.


\paragraph{Proof of Theorem \ref{th:XOS_Bayesian_Comp}.}
We now proceed with the proof of Theorem \ref{th:XOS_Bayesian_Comp}.  The proof will proceed in two parts.
We begin with Lemma \ref{lem:xos-exists}, which establishes the \emph{existence} of prices that achieve the desired welfare properties, without regard for computation.  In fact, Lemma \ref{lem:xos-exists} will also establish something stronger: if the prices are perturbed slightly, this does not have too large an effect on expected welfare.  We will then use this stronger property to show how the prices can be computed efficiently via sampling.  This sampling process generates the additional additive error term in Theorem~\ref{th:XOS_Bayesian_Comp}.


Before delving into the details of the proof, we need the following definition of an item's welfare contribution.
Fix a valuation profile $\vals=(\vali[1],\dots,\vali[n])$ and algorithm $\alg$, and let $\allocs=(\alloci[1],\dots,\alloci[n])$ be the allocation $\alg(\vals)$.
For each XOS valuation function $\vali(\cdot)$, define the {\em corresponding additive representative} function for the set $\alloci$ as the function $A_i(\cdot)$ satisfying: (i) $\vali(S)\ge A_i(S)$ for any $S\subset[m]$, and (ii) $\vali(\alloci)=A_i(\alloci)$. For every item $j\in\alloci$ we define $\SW_j(\vals):=A_i(\{j\}).$  We think of $\SW_j(\vals)$ as the contribution of item $j$ to the social welfare under valuation profile $\vals$.


\begin{lemma}
\label{lem:xos-exists}
Given a distribution $\dists$ over $\XOS$ valuations, let $\prices$ be the price vector defined as
\[
\pricei[j]=\frac{1}{2}\cdot\Exlong[\vals\sim\dists]{\SW_j(\vals)}.
\]
Let $\pricesp$ be any price vector such that $|\priceip[j] - \pricei[j]| < \delta$ for all $j$.
Then, for any arrival order $\pi$, consumption under prices $\prices'$ results in expected welfare at least $\frac{1}{2}\E_{\vals \sim \dists}[\SW(\alg(\vals))] - m\delta$.
\end{lemma}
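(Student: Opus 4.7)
The plan is to decompose the expected welfare of the posted-price mechanism into expected revenue plus expected consumer surplus, then bound each term via a prophet-inequality-style argument.

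First I reduce to the unperturbed case. Since $\priceip[j]\in[\pricei[j]-\delta,\pricei[j]+\delta]$, any bound established for prices $\prices$ extends to $\prices'$ at a cost of at most $O(m\delta)$ (using $\priceip[j]\ge\pricei[j]-\delta$ for revenue and $(\SW_j-\priceip[j])^+\ge(\SW_j-\pricei[j])^+-\delta$ for surplus). So the target becomes showing the expected welfare is at least $\sum_j \pricei[j]=\tfrac{1}{2}\,\E_{\vals\sim\dists}[\SW(\alg(\vals))]$ up to additive $O(m\delta)$.

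Next I bound the consumer surplus using the XOS structure. Fix an arrival order $\pi$. For each agent $i$ let $T_i(\vals_{-i})$ denote the set of items still available when $i$ arrives; because $\pi$ is fixed and demand-query ties are broken consistently, this depends only on $\vals_{-i}$. Let $X_i^{*}(\vals)=\alg(\vals)_i$ with additive XOS representative $A_i^{*}$, so $A_i^{*}(\{j\})=\SW_j(\vals)$ for $j\in X_i^{*}$. Because agent $i$ purchases a utility-maximizing bundle in $T_i$, for any $S\subseteq T_i$ we have $u_i(\vals)\ge \vali(S)-\sum_{j\in S}\priceip[j]$. Taking $S$ to be the positive-surplus subset of $X_i^{*}\cap T_i$ and using $\vali(S)\ge A_i^{*}(S)=\sum_{j\in S}\SW_j(\vals)$ gives
\[
u_i(\vals)\ \ge\ \sum_{j\in X_i^{*}(\vals)\cap T_i(\vals_{-i})}(\SW_j(\vals)-\priceip[j])^+.
\]
Summing over $i$ and using disjointness of the benchmark bundles, each $j$ contributes at most once; whenever $j$ lies in some $X_{i^{*}}^{*}$ and $j\notin\sold(\vals)$, the item is still in $T_{i^{*}}$, so
\[
\sum_i u_i(\vals)\ \ge\ \sum_j(\SW_j(\vals)-\priceip[j])^+\,\mathbb{1}\bigl[j\notin\sold(\vals)\bigr].
\]

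Combining with the revenue $\sum_j \priceip[j]\,\mathbb{1}[j\in\sold(\vals)]$ and taking expectations,
\[
\E[\text{welfare}]\ \ge\ \sum_j \Bigl(\priceip[j]\,\Pr[j\in\sold]\ +\ \E\bigl[(\SW_j(\vals)-\priceip[j])^+\,\mathbb{1}[j\notin\sold(\vals)]\bigr]\Bigr).
\]
The last step is to show this is at least $\sum_j\pricei[j]-O(m\delta)$. The prophet-style intuition is: for each item $j$, either $j$ is sold (contributing $\approx\pricei[j]$ per unit probability via revenue) or the benchmark owner finds $j$ available (contributing on average $\E[(\SW_j-\priceip[j])^+]\ge 2\pricei[j]-\priceip[j]\approx\pricei[j]$ via the XOS representative, since $\pricei[j]=\tfrac12\E[\SW_j]$). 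Making this balance quantitative is what I expect to be the hardest step, because $\SW_j(\vals)$ and the event $\{j\in\sold(\vals)\}$ can be correlated in the ``wrong'' direction, so the argument cannot proceed purely item-by-item; it must aggregate globally over items and leverage the independence of $\vali$ from $\vals_{-i}$ to decouple the benchmark from the posted-price run.
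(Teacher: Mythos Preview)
Your high-level structure (welfare $=$ revenue $+$ surplus, prophet-style balancing) matches the paper's, and your per-agent surplus bound $u_i(\vals)\ge\sum_{j\in X_i^{*}(\vals)\cap T_i(\valsmi)}(\SW_j(\vals)-\priceip[j])^+$ is correct. The gap is exactly where you flag it. Your displayed inequality
\[
\sum_i u_i(\vals)\ \ge\ \sum_j\bigl(\SW_j(\vals)-\priceip[j]\bigr)^+\,\Idr{j\notin\sold(\vals)}
\]
is too weak to finish: take a single item, a single agent with $v\in\{0,2\}$ uniformly, and $\alg$ allocating the item iff $v>0$. Then $\pricei[j]=\tfrac12$, the item is sold iff $v=2$, your inequality gives $\E[\sum_i u_i]\ge 0$, and adding revenue $\tfrac12\cdot\tfrac12=\tfrac14$ yields only $\tfrac14<\tfrac12=\pricei[j]$. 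The loss comes from replacing $\Idr{j\in T_i}$ by $\Idr{j\notin\sold(\vals)}$ \emph{pointwise in $\vals$}, which kills the surplus on precisely the realizations where $\SW_j$ is large. More generally, even if you retain $T_i$ rather than $\sold$, you cannot decouple, because your benchmark $X_i^{*}(\vals)=\alg(\vals)_i$ depends on $\valsmi$---the same randomness that determines $\soldi$.

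The missing device is the one you gesture at in your last sentence but do not carry out: an independent resample. The paper takes as agent $i$'s benchmark bundle not $\alg(\vals)_i$ but $\alloci(\vali,\walsmi)$ for a fresh $\walsmi\sim\distsmi$ drawn independently of $\vals$. Now $\SW_j(\vali,\walsmi)\cdot\Idr{j\in\alloci(\vali,\walsmi)}$ is independent of $\Idr{j\notin\soldi(\valsmi)}$, so the expectation factors as
\[
\Pr\bigl[j\notin\soldi\bigr]\cdot\E\!\left[\bigl(\SW_j-\priceip[j]\bigr)^+\Idr{j\in\alloci}\right].
\]
Only \emph{after} this factoring does the paper weaken $\Pr[j\notin\soldi]$ to $\Pr[j\notin\sold]$, drop the positive part, and sum over $i$ to obtain $\sum_j\Pr[j\notin\sold]\cdot\pricei[j]$; combined with the revenue term $\sum_j\Pr[j\in\sold]\cdot\priceip[j]$ this yields $\sum_j\pricei[j]-m\delta$. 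Your route cannot reach the factored form because $X_i^{*}(\vals)$ shares randomness with $\soldi$; the ghost sample $\walsmi$ is exactly what severs that dependence.
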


\begin{proof}
First, by the definition of $\pricei[j]$,
\begin{align}
\priceip[j]  & = \Exlong[\vals\sim\dists]{\SW_j(\vals)-\priceip[j]} + 2(\priceip[j]-\pricei[j])  \label{eq:pricej-comp}\\
& = \sum_{i=1}^{n}\Exlong[\vals\sim\dists]{\InParentheses{\SW_j(\vals)-\priceip[j]}\cdot\Idr{j\in\alloci(\vals)}} + 2(\priceip[j]-\pricei[j]).\nonumber
\end{align}

We are now going to estimate the sum of buyers' utilities in expectation over $\dists$. Fix $i$ and $\vals=(\vali,\valsmi)$.
Let $\soldi(\vals,\pi)$  denote the set of items that have been sold before the arrival of buyer $i$.
Recall that buyer $i$ picks an allocation\footnote{Note that if a buyer has more than one bundle in his demand correspondence, then we assume that ties can be broken arbitrarily -- even adversarially.}
that maximizes his utility with respect to his valuation $\vali$ and prices $\prices$, from among the items in $M \setminus \soldi(\vals,\pi)$.

Consider another random valuation profile $\walsmi\sim\distsmi$ which is independent of $\vals$.  Let $\alloci(\vali,\walsmi)$ be the allocation returned by $\alg$ on input $(\vali,\walsmi)$. We consider additive representative function $A_i$ for the set $\alloci(\vali,\walsmi)$, so that $A_i(\{j\})=\SW_j(\vali,\walsmi)$ for each $j\in\alloci(\vali,\walsmi)$.
Let $S_i(\vali,\valsmi,\walsmi):=\alloci(\vali,\walsmi)\setminus\soldi(\vals,\pi)$ be the subset of items in $\alloci(\vali,\walsmi)$ that are available to be purchased when buyer $i$ arrives. We note that buyer $i$ could have picked the set $S_i(\vali,\valsmi,\walsmi)$ and, therefore, his utility must be at least the utility he would get from purchasing that set.  Thus we have
\[
\utili(\vals)\ge\Exlong[\walsmi]{\sum_{j\in S_i(\vali,\valsmi,\walsmi)}\max\InParentheses{\SW_j(\vali,\walsmi) -\priceip[j],0}}.
\]
Adding these inequalities for all buyers and taking the expectation over all $\vals\sim\dists$ we get
\begin{align}
\Exlong[\vals\sim\dists]{\sum_{i=1}^{n}\utili(\vals)}  & \ge \sum_{j\in M}\sum_{i=1}^{n}
\operatornamewithlimits{\mathbf E}\limits_{\substack{\vali,\valsmi,\\ \walsmi}}
\left[
\Idr{j\in\alloci(\vali,\walsmi)}\vphantom{E\max_{\walsmi}}
\cdot\max\InParentheses{\SW_j(\vali,\walsmi)- \priceip[j],0}
\cdot\Idr{j\notin\soldi(\vals,\pi)}
\right].
\label{eq:sumutil-comp}
\end{align}
%
We further observe that $\soldi(\vals,\pi)$ does not depend on $\vali$.  That is, $\soldi(\vals,\pi)=\soldi(\valsmi,\pi)$. Therefore, we can rewrite \eqref{eq:sumutil-comp} as follows:
\begin{align}
\Exlong[\vals\sim\dists]{\sum_{i=1}^{n}\utili(\vals)}   
& \ge  \sum_{j\in M}\sum_{i=1}^{n}
\Prlong[\vals]{j\notin\soldi(\vals,\pi)}
\cdot \Exlong[\vali,\walsmi]{\max\InParentheses{\SW_j(\vali,\walsmi)-\priceip[j],0}\cdot
\Idr{j\in\alloci(\vali,\walsmi)}} \nonumber\\
& \ge  \sum_{j\in M}\sum_{i=1}^{n} \Prlong[\vals]{j\notin\sold(\vals,\pi)}
\cdot \Exlong[\vali,\walsmi]{\max\InParentheses{\SW_j(\vali,\walsmi)-\priceip[j],0}\cdot
\Idr{j\in\alloci(\vali,\walsmi)}}\nonumber\\
& \ge  \sum_{j\in M}\Prlong[\vals]{j\notin\sold(\vals,\pi)}
\cdot \left( \sum_{i=1}^{n}\Exlong[\vals]{\InParentheses{\SW_j(\vals)-\priceip[j]}\cdot
\Idr{j\in\alloci(\vals)}}\right) \nonumber\\
& = \sum_{j\in M}\Prlong[\vals]{j\notin\sold(\vals,\pi)}\cdot(\pricei[j] + (\pricei[j]- \priceip[j])).
\label{eq:estimateutil-comp}
\end{align}
%
In the second inequality, we decreased each probability $\Prx{j\notin\soldi(\vals,\pi)}$ to $\Prx{j\notin\sold(\vals,\pi)}$; the inequality holds as all the terms in the summation are non negative. In the third inequality we decreased the random variables under expectations and substituted every variable $(\vali,\walsmi)$ to $\vals$. The last equality follows from \eqref{eq:pricej-comp}.  Inequality \eqref{eq:estimateutil-comp} is our desired bound on the sum of buyer utilities.

We now turn to the expected revenue, which is
\begin{align}
\Exlong[\vals\sim\dists]{\text{Rev}(\vals,\pi)} = \label{eq:revenue-comp} 
\sum_{j\in M}\Prlong[\vals]{j\in\sold(\vals,\pi)}\cdot(\pricei[j]-(\pricei[j]-\priceip[j])).
\end{align}

Therefore, adding \eqref{eq:estimateutil-comp} and \eqref{eq:revenue-comp} we derive the following bound on the expected social welfare:
\begin{align}
\Exlong[\vals\sim\dists]{\sum_{i=1}^{n}\utili(\vals)}+\Exlong[\vals\sim\dists]{\text{Rev}(\vals,\pi)}
 & \ge \sum_{j\in M}\pricei[j] 
+ \sum_{j\in M}(\pricei[j]-\priceip[j])\left(1-2\Prlong[\vals]{j\in\sold(\vals,\pi)}\right)\nonumber\\
& \ge  \frac{1}{2}\Exlong[\vals\sim\dists]{\sum_{i=1}^{n}\vali(\alloci)} - \sum_{j\in M}|\pricei[j]-\priceip[j]|
\nonumber\\
& \ge \frac{1}{2}\Exlong[\vals\sim\dists]{\sum_{i=1}^{n}\vali(\alloci)} - m\delta \nonumber
\end{align}
as required.
\end{proof}

With Lemma \ref{lem:xos-exists} at hand, we are ready to complete the proof of Theorem \ref{th:XOS_Bayesian_Comp}.

\begin{proofsketch} (of Theorem \ref{th:XOS_Bayesian_Comp})
%
%
It remains to show how to compute an appropriate choice of prices $\prices'$ satisfying the conditions of Lemma \ref{lem:xos-exists}.
Our approach will be to estimate $\pricei[j] = \frac{1}{2}\cdot\Exlong[\vals\sim\dists]{\SW_j(\vals)}$ by repeatedly sampling a valuation profile $\hat{\vals} \sim \dists$ and computing $\frac{1}{2}\SW_j(\hat{\vals})$.  Since $\frac{1}{2}\SW_j(\hat{\vals})$ is a random variable lying in $[0,1]$, standard concentration bounds imply that we can accurately estimate its expectation in a relatively small number $t$ of samples.  In Appendix \ref{app:price-sample} we work out the appropriate bounds and show that $t = (\log m + \log n - \log\eps)4m^2 /\eps^2$ samples per item are sufficient to satisfy the conditions of Theorem \ref{th:XOS_Bayesian_Comp}.

\begin{algorithm}[h]
\begin{algorithmic}[1]
\STATE For each item $j \in M$:\\
\STATE \quad Repeat  $t$ times:\\
\STATE \quad \quad Draw $\vals \sim \dists$ and let $\allocs = \alg(\vals)$.\\
\STATE \quad \quad Let $i$ be the agent for which $j \in \alloci$.\\
\STATE \quad \quad Query the $\XOS$ oracle for $\vali$ to find $\SW_j(\vals)$.\\
\STATE \quad Let $\priceip[j]$ be half of the average value of $\SW_j(\vals)$ seen over all $t$ iterations.\\
\STATE return $\prices'$
\end{algorithmic}
\caption{- Price computation algorithm, paramaterized by positive integer $t$.}
\label{alg:price-computation}
\end{algorithm}

We can therefore take $\priceip[j]$ to be the empirical estimate after this number of samples, satisfying the conditions of the theorem.
%
To summarize, this procedure for computing $\pricesp$ is listed formally as Algorithm \ref{alg:price-computation}.
\end{proofsketch}

\section{Posted Prices for General Valuations}
\label{sec:mph}

A result similar to Theorem \ref{th:XOS_Bayesian_Comp}
holds for the more general class of
$\MOPH$-$k$ valuations, where we get $O(k)$-approximate DSIC mechanisms for functions with complementarity level $k$. 
We will begin by formally defining the {\em maximum over positive hypergraphs} ($\MOPH$) hierarchy and providing other preliminaries.  We will then provide a formal result statement.

\subsection{Preliminaries and Definitions. }
To explain {\em maximum over positive hypergraphs} ($\MOPH$) hierarchy, we first need a few preliminaries.
A hypergraph representation $h$ of valuation function $v:2^M \to \nonnegR$ is a set function that satisfies $v(S) = \sum_{T \subseteq S} h(T)$. It is easy to verify that any valuation function $v$ admits a unique hypergraph representation and vice versa.
A set $S$ such that $h(S) \neq 0$ is said to be a {\em hyperedge} of $h$.
The hypergraph representation can be thought as a weighted hypergraph, where every vertex is associated with an item in $M$, and the weight of each hyperedge $e\subseteq M$ is $h(e)$. Then the value of the function for any set $S\subseteq M$ is the total value of all hyperedges that are contained in $S$.

The {\em rank} of a hypergraph representation $h$ is the cardinality $k$ of the largest hyperedge.
The rank of $v$ is the rank of its corresponding $h$ and we refer to a valuation function $v$ with rank $k$ as a \emph{hypergraph-$k$} valuation. If the hypergraph representation of $v$ is non-negative, i.e. for any $S\subseteq M$, $h(S)\geq 0$, then we refer to function $v$ as a \emph{positive hypergraph-$k$} function (PH-$k$) \cite{Abraham12}.
We are now ready to present the class of $\MOPH$-$k$ valuations.

\begin{definition}[$\MOPH$-$k$ valuation]
A monotone valuation function $v:2^M\to \nonnegR$ is {\em Maximum over Positive Hypergraph-$k$} ($\MOPH$-$k$) if it can be expressed as a maximum over a set of PH-$k$ functions.  That is, there exist PH-$k$ functions $\{v_{\ell}\}_{\ell\in\Ell}$ such that for every set $S \subseteq M$,
\begin{equation}
\textstyle{v(S) = \max_{\ell \in \Ell} v_{\ell}(S)},
\end{equation}
where $\Ell$ is an arbitrary index set.
\end{definition}

It can be easily verified that the highest level of the hierarchy, $\MOPH$-$m$ captures all monotone functions, and the lowest level, $\MOPH$-$1$,
captures all $\XOS$ functions.


Finally, we define what is meant by an \MPHk oracle, which is an extension of XOS oracles to higher levels of the $\MOPH$ hierarchy.
Suppose that valuation function $v$ is \MPHk, with supporting PH-$k$ functions $\{v_{\ell}\}_{\ell\in\Ell}$.  An \MPHk-oracle for $v$ takes as input a set of items $S$, and returns the PH-$k$ function $v_{\ell}$ for which $v(S) = v_{\ell}(S)$.  We will assume that this function $v_{\ell}$ is returned in its explicit hypergraph representation, i.e.\ as a list of weighted hyperedges.  Note that the size of this representation depends on the number of hyperedges required to express the PH-$k$ functions $v_{\ell}$, and is at most $O(m^k)$.  On a side note, it is this bound that leads to a runtime that is polynomial in $m^k$ in Theorem \ref{th:mohk_Bayesian_Comp}.  Note that if we restricted attention to \MPHk valuations whose supporting PH-$k$ functions each have at most $r$ hyperedges, then this runtime dependency would change from $m^k$ to $r$.

\subsection{Pricing for $\MOPH$-$k$ valuations. }
Our result is cast in the following theorem, whose proof is deferred to Appendix \ref{app:mph}.\footnote{The proof of Theorem~\ref{th:mohk_Bayesian_Comp} follows Theorem~\ref{th:XOS_Bayesian_Comp}, but with an important difference: the accounting of the contribution of an item $j$ to the welfare is more complex, since one must consider all hyperedges in which $j$ appears.  This complicates the choice of prices, as well as the derivation of welfare bounds.
}

\begin{theorem}
\label{th:mohk_Bayesian_Comp}
Suppose our Bayesian instance $\dists$ over $\MOPH$ valuations is given via a sample access to $\dists$.
Suppose that for every $\vals\sim\dists$ we have
\begin{enumerate}
\item black-box access to a welfare maximization algorithm $\alg$ for combinatorial auctions,
\item an $\MOPH$ query oracle for the valuations in the support of $\dists$.
\end{enumerate}
Then, for every $\eps>0$, one can compute item prices in time\footnote{The exponential dependence on $k$ in the runtime is related to the representation of $\MOPH$ valuations.  In particular, the output of an $\MOPH$-$k$ oracle can be of size $O(m^k)$.  One could reduce this bound by imposing constraints on the complexity of a valuation's $\MOPH$ representation.  This is discussed further in Appendix \ref{app:mph}.} $\poly(m^k,n,1/\eps)$ that generate expected welfare of at least $\frac{1}{4k}\E_{\vals \sim \dists}[\SW(\alg(\vals))] - \eps$ for any buyers' arrival order.
\end{theorem}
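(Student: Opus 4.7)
I will mirror the two-part template of Theorem~\ref{th:XOS_Bayesian_Comp}: first an existence lemma showing that a specific price vector $\prices$ achieves welfare $\ge \frac{1}{4k}\E[\SW(\alg)]$ and is robust to $\delta$-perturbations, then a sampling step that computes $\prices'$ with $|\priceip[j]-\pricei[j]|\le\delta$ in polynomial time. For each profile $\vals$ and each buyer $i$, invoke the $\MOPH$-$k$ oracle at $\alloci(\alg(\vals))$ to obtain a PH-$k$ supporting hypergraph $h_i^{\vals}$ with $\vali(\alloci) = \sum_{e\subseteq\alloci}h_i^{\vals}(e)$ and $\sum_{e\subseteq S}h_i^{\vals}(e)\le\vali(S)$ for all $S$. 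Define
\[
\SW_j(\vals) := \sum_{i=1}^n \Idr{j\in\alloci(\vals)}\sum_{e\subseteq\alloci(\vals):\,j\in e}h_i^{\vals}(e), \qquad \pricei[j] := \tfrac{1}{2k}\,\Exlong[\vals\sim\dists]{\SW_j(\vals)}.
\]
Since each hyperedge has at most $k$ items, $\sum_j \SW_j(\vals)\le k\cdot\SW(\alg(\vals))$, so $\sum_j\pricei[j]\le\tfrac{1}{2}\E[\SW(\alg)]$.

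\textbf{Utility analysis.} Fix arrival order $\pi$, perturbation $\pricesp$, and buyer $i$. Let $\walsmi\sim\distsmi$ be independent of $\vals$, set $T_i := \alloci(\vali,\walsmi)$, and let $\tilde h_i$ be the oracle hypergraph for $\vali$ on $T_i$. Because $T_i\setminus\soldi$ is a feasible bundle for $i$ and $\sum_{e\subseteq S}\tilde h_i(e)\le\vali(S)$,
\[
\utili(\vals) \;\ge\; \vali(T_i) - \sum_{j\in T_i}\Idr{j\in\soldi}\sum_{e\subseteq T_i:\,j\in e}\tilde h_i(e) - \sum_{j\in T_i\setminus\soldi}\priceip[j],
\]
using $\sum_{e\subseteq T_i\setminus\soldi}\tilde h_i(e) = \vali(T_i) - \sum_{e\subseteq T_i,\,e\cap\soldi\ne\emptyset}\tilde h_i(e)$ and the union bound $\Idr{e\cap\soldi\ne\emptyset}\le\sum_{j\in e}\Idr{j\in\soldi}$; the latter is where a factor of $k$ is lost relative to the XOS case. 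Taking expectations over $(\vals,\walsmi)$, using independence of $\soldi$ from $(\vali,\walsmi)$, the distributional identity $(\vali,\walsmi)\stackrel{d}{=}\vals$, the bound $\sum_i\Pr[j\in T_i]\le 1$, and $\Pr[j\in\soldi]\le\Pr[j\in\sold]$, then summing over $i$, yields
\[
\Exlong[\vals\sim\dists]{\sum_i\utili(\vals)} \;\ge\; \E[\SW(\alg)] \;-\; 2k\sum_j\pricei[j]\Prlong[\vals]{j\in\sold} \;-\; \sum_j\priceip[j].
\]

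\textbf{Cancellation via revenue; sampling.} Adding the expected revenue $\sum_j\priceip[j]\Pr[j\in\sold]$ and collecting terms,
\[
\Exlong[\vals\sim\dists]{\SW(\vals,\pi)} \;\ge\; \E[\SW(\alg)] - \sum_j\pricei[j] - (2k-1)\sum_j\pricei[j]\Prlong[\vals]{j\in\sold} - O(m\delta).
\]
The key structural departure from the XOS proof is to combine this with the trivial inequality $\E[\SW]\ge\E[\text{Rev}]\ge \sum_j\pricei[j]\Pr[j\in\sold]-m\delta$: multiplying the latter by $(2k-1)$ and adding cancels the $\Pr[j\in\sold]$ terms, giving $2k\cdot\E[\SW] \ge \E[\SW(\alg)] - \sum_j\pricei[j] - O(km\delta) \ge \tfrac{1}{2}\E[\SW(\alg)] - O(km\delta)$; dividing by $2k$ yields $\E[\SW]\ge\tfrac{1}{4k}\E[\SW(\alg)] - O(m\delta)$. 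The sampling step then proceeds as in Appendix~\ref{app:price-sample}: each $\MOPH$-$k$ oracle response has size $O(m^k)$, giving per-sample cost $\poly(m^k,n)$, and standard Hoeffding concentration shows that $\poly(m,n,1/\eps)$ samples suffice to drive $\delta = \Theta(\eps/m)$. The main obstacle is the utility analysis above: one cannot use an XOS-style per-item max since individual items have no value in a PH-$k$ function, and the only simple way to lower-bound $\sum_{e\subseteq T_i\setminus\soldi}\tilde h_i(e)$ costs a factor of $k$; the repair is to tolerate this loss in the utility bound and cancel it using the $(2k-1)$-weighted revenue inequality, an idea that has no analogue in the XOS proof.
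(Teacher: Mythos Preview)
Your proof is correct and follows essentially the same route as the paper's Appendix~\ref{app:mph}: define item prices from hyperedge contributions scaled by $\Theta(1/k)$, lower-bound each buyer's utility via the hypothetical purchase of $\alloci(\vali,\walsmi)\setminus\soldi$ together with a union bound over hyperedges meeting $\soldi$ (the paper isolates this step as Lemma~\ref{lem:mohk_price}), and then cancel the resulting factor-$k$ loss against revenue---the paper phrases this last step as multiplying $\E[\sum_i\utili]+\E[\text{Rev}]$ through by $\alpha k$, which is the same rearrangement as your ``add $(2k{-}1)$ copies of $\E[\SW]\ge\E[\text{Rev}]$.'' The only cosmetic difference is that the paper splits each hyperedge weight uniformly among its vertices, $w_i(T)/|T|$, so that $\sum_j p_j(\vals)=\tfrac{1}{\alpha}\SW(\alg(\vals))$ holds with equality, whereas you charge the full edge weight to every incident item and absorb the overcount into the $1/(2k)$ scaling; both choices yield the same $1/(4k)$ bound.
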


We also show that this result is essentially tight. Indeed, for each level $k$ of  complementarities across the items in the $\MOPH$-$k$ hierarchy, we may consider single minded (of size $k$) and unit-demand valuations. It turns out that item prices may result in an outcome with a linear (in the number of items) loss in social welfare. The example is deferred to Appendix~\ref{app:mph-lower}. 

\section{Discussion and Open Problems}
\label{sec:conclusions}

We conclude with a few remarks.

First, in our mechanisms, we consider an arbitrary order of arrivals, which may be chosen by an adversary after the prices are posted, but before the adversary observes the realization of the buyer valuations.
It is not difficult to verify that the same results extend to an {\em adaptive} adversary, who chooses the arrival order sequentially; i.e., an adversary who observes which items have been purchased by previous buyers and even the realization of previous buyers' valuations, and chooses the next buyer to arrive based on this information. Our proof techniques (in Theorems~\ref{th:XOS_Bayesian_Comp} and \ref{th:mohk_Bayesian_Comp}) apply to this adaptive adversary as well.

Second, readers who are familiar with literature on Walrasian equilibrium will realize the similarities between the two models, but also the stark contrast.
The main difference is whether agents arrive to the market {\em sequentially} (as in our model), or {\em simultaneously} (as in a Walrasian equilibrium).
Recent results~\cite{FGL13} have shown that in the simultaneous model (even when some items may remain unsold), there may be a {\em linear} loss in welfare for XOS buyers, even in a full information setting.
Thus our work demonstrates a strong gap in welfare between simultaneous and sequential arrivals, when restricted to individual demand satisfaction.


Our model and results leave a number of directions for future research.
First, the constant approximation for XOS valuations implies (by known results, see e.g.~\cite{BR11}) a logarithmic approximation for subadditive valuations.
It remains open whether a constant approximation for subadditive valuations can be achieved.

In this work we focused on a setting in which the buyer arrival order is adversarial, but one might consider relaxing this worst-case setting.  For example: does the approximability of the problem substantially improve if the buyers arrive in a uniformly random order?  What if the mechanism can select the order, subject to incentive compatibility constraints?  Alternatively, one might ask whether a constant approximation is still possible if the adversary is more powerful, and can observe all valuation realizations before selecting the arrival order.  Also, rather than looking at an exogenously-imposed arrival order, one might imagine bidders strategically choosing their arrival times, with early positions in queue being costly to secure.

Finally, throughout the paper we assume that items are indivisible and heterogeneous. It would be interesting to partially relax these assumptions. For example, one could assume that every item in the market has a few identical copies and that every buyer wants at most a single copy of each item. It would be interesting to analyze the efficiency of posted price mechanisms as a function of the minimal number of item copies. Given the negative results for valuations with high degree of complementarity, it would be particularly interesting to find relaxations that admit positive results, say for single-minded buyers.

\section*{Acknowledgments}
We are grateful to Shaddin Dughmi, David Kempe and Noam Nisan for helpful discussions.
Michal Feldman is partially supported by the European Research Council under the European Union's Seventh Framework Programme (FP7/2007-2013) / ERC grant agreement number 337122.

\bibliographystyle{plain}
\bibliography{item-prices}

\medskip

\appendix
\section*{APPENDIX}
\setcounter{section}{0}


%
%


\section{Details of the proof of Theorem \ref{th:XOS_Bayesian_Comp}}
\label{app:price-sample}
We now present the details omitted from the proof of Theorem \ref{th:XOS_Bayesian_Comp}.

We wish to show how to compute an appropriate choice of prices $\prices'$ satisfying the conditions of Lemma \ref{lem:xos-exists}.
Our approach will be to estimate $\pricei[j] = \frac{1}{2}\cdot\Exlong[\vals\sim\dists]{\SW_j(\vals)}$ by repeatedly sampling a valuation profile $\hat{\vals} \sim \dists$ and computing $\frac{1}{2}\SW_j(\hat{\vals})$.  Since $\frac{1}{2}\SW_j(\hat{\vals})$ is a random variable lying in $[0,1]$, standard concentration bounds imply that we can accurately estimate its expectation in a relatively small number $t$ of samples.  We can therefore take $\priceip[j]$ to be the empirical estimate after this number of samples, satisfying the conditions of the theorem.  This procedure is listed formally as Algorithm \ref{alg:price-computation} in Section \ref{sec:xos}.

%

We wish to choose $t$ large enough that, with probability at least $1 - \eps/n$, we will have $|\priceip[j] - \pricei[j]| < \eps/2m$ for all $j$.
Fix any $j$ and note that $\priceip[j]$ is the average of $t$ identical samples from a distribution supported on $[0,1]$, with expected value $\pricei[j]$.
Thus, by the Hoeffding bound, we have that
\[ \Pr[ |\priceip[j] - \pricei[j]| > \eps/2m ] < 2e^{- t(\eps/2m)^2}. \]
We can therefore choose $t = (\log m + \log n - \log\eps)4m^2 /\eps^2$ to get $\Pr[ |\priceip[j] - \pricei[j]| > \eps/2m ] < \eps/mn$.
Applying a union bound over all $j \in M$, we have that $|\priceip[j] - \pricei[j]| < \eps/2m$ for all $j$
with probability at least $1 - \eps/n$, as desired.

Setting $\delta = \eps/2m$ in Lemma~\ref{lem:xos-exists}, we have that our computed prices generate welfare at least $\frac{1}{2}\E_{\vals \sim \dists}[\SW(\alg(\vals))] - \eps/2$, with probability at least $1 - \eps/n$.
We conclude that our computed prices generate an expected welfare of at least
\begin{align*}
\left( \frac{1}{2}\E_{\vals \sim \dists}[\SW(\alg(\vals))] - \frac{\eps}{2} \right) \left( 1-\frac{\eps}{n} \right) > 
\frac{1}{2}\E_{\vals \sim \dists}[\SW(\alg(\vals))] - \eps, 
\end{align*}
as required.
The last inequality follows since $\E_{\vals \sim \dists}[\SW(\alg(\vals))]\leq\E_{\vals \sim \dists}[\sum_{i=1}^{n}\vali(M)] \leq n$.

\section{Proof of Theorem~\ref{th:mohk_Bayesian_Comp}: $\MOPH$ valuations}
\label{app:mph}

\noindent
We closely follow the proof of Theorem~\ref{th:XOS_Bayesian_Comp} for XOS buyers. However, there is an extra difficulty for $\MOPH$-$k$ valuations, since  the concept of the ``contribution of an item to welfare'' is not as straightforward as for $XOS$ valuations. Our main new challenge will be to appropriately account for the contributions of different items. 

We first describe an ideal price vector $\prices$ which we would like to use for the distribution $\dists$. For each fixed valuation profile $\vals=(\vali[1],\dots,\vali[n])$ we consider allocation $\allocs(\vals)=(\alloci[1](\vals),\dots,\alloci[n](\vals))$ returned by black-box algorithm $\alg$. For each \MPHk valuation function $\vali(\cdot)$ we take the respective hypergraph representative function $A_i(\cdot)$ for the set $\alloci(\vals)$, i.e., $\vali(S)\ge A_i(S)$ for any $S\subset[m]$ and $\vali(\alloci(\vals))=A_i(\alloci(\vals))$.  Write $\weight_i(\cdot)$ for the hypergraph weights corresponding to the hypergraph function $A_i(\cdot)$; then, by definition, $A_i(S) = \sum_{T \subseteq S}\weight_i(T)$ for all $S \subseteq \alloci(\vals)$.

For every $\vals$, every buyer $i$, and each item $j\in\alloci(\vals)$, we define
\[
\pricei[j](\vals) = \frac{1}{\alpha}\sum_{\substack{T \ni j\\T \subseteq \alloci(\vals)}}\frac{\weight_i(T)}{|T|},
\]
where $\alpha$ is a constant to be determined later.  The price vector $\prices(\vals)$ has a natural interpretation: for each hyperedge in the hypergraph function $A_i(\alloci(\vals))$, divide its weight uniformly among the items in that edge; the price of item $j$ is then the total weight allocated to item $j$, scaled down by factor $\alpha$.  The price $\pricei[j](\vals)$ for item $j$ is the ideal price we would like to set in the full-information setting, if we knew the valuation profile $\vals$.

We can now define an ideal price of item $j$ in the Bayesian setting, which will be
\[
\pricei[j]=\Exlong[\vals\sim\dists]{\pricei[j](\vals)}.
\]

The following Lemma relates the full-information prices for a subset of items to the marginal impact on a buyer's value if those items are removed from an allocation.
\begin{lemma}
\label{lem:mohk_price}
For any $\vals$, any buyer $i$, and any $Q \subseteq \alloci(\vals)$,
\[ \vali( \alloci(\vals) \backslash Q ) + \alpha k \cdot \sum_{j \in Q} \pricei[j](\vals) \ge \vali(\alloci(\vals)). \]
\end{lemma}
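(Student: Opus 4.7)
The plan is to use the positivity of the hypergraph weights $\weight_i$ (since $A_i$ is a PH-$k$ function) together with the bound $|T|\le k$ on the size of any hyperedge. First I would rewrite the left and right sides in terms of the weights $\weight_i(T)$ for $T\subseteq \alloci(\vals)$, exploiting both (i) the equality $\vali(\alloci(\vals))=A_i(\alloci(\vals))=\sum_{T\subseteq \alloci(\vals)}\weight_i(T)$, and (ii) the inequality $\vali(\alloci(\vals)\setminus Q)\ge A_i(\alloci(\vals)\setminus Q)=\sum_{T\subseteq \alloci(\vals)\setminus Q}\weight_i(T)$, which comes from the hypergraph representative property.

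Combining these two facts and using that the ``missing'' hyperedges are exactly those which meet $Q$, we get
\[
\vali(\alloci(\vals))-\vali(\alloci(\vals)\setminus Q)\;\le\;\sum_{\substack{T\subseteq \alloci(\vals)\\ T\cap Q\neq\emptyset}}\weight_i(T).
\]
So it suffices to show that $\alpha k\sum_{j\in Q}\pricei[j](\vals)$ dominates the right-hand side. Next I would substitute the definition of $\pricei[j](\vals)$ and swap the order of summation, obtaining
\[
\alpha k\sum_{j\in Q}\pricei[j](\vals)\;=\;k\sum_{T\subseteq\alloci(\vals)}\weight_i(T)\,\frac{|T\cap Q|}{|T|}.
\]

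The final step — which is the key point, though not really an obstacle — is the termwise comparison. For each hyperedge $T\subseteq\alloci(\vals)$ with $T\cap Q\neq\emptyset$, we have $|T\cap Q|\ge 1$ and $|T|\le k$ (because $A_i$ is a rank-$k$ hypergraph function), so $k\cdot |T\cap Q|/|T|\ge 1$. Since $\weight_i(T)\ge 0$ by positivity, the contribution of this hyperedge on the right is at most $k\weight_i(T)|T\cap Q|/|T|$, its contribution on the left; hyperedges disjoint from $Q$ contribute zero on both sides. Summing over $T$ then yields the desired inequality, completing the proof.
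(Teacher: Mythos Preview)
Your proposal is correct and follows essentially the same approach as the paper: expand both sides via the hypergraph representative $A_i$, and compare termwise using $|T|\le k$, $|T\cap Q|\ge 1$, and $\weight_i(T)\ge 0$. If anything, you are slightly more careful than the paper in writing $\vali(\alloci(\vals)\setminus Q)\ge A_i(\alloci(\vals)\setminus Q)$ as an inequality rather than an equality, which is the correct direction and all that is needed.
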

\begin{proof}
\begin{align*}
\vali( \alloci(\vals) \backslash Q ) + \alpha k \cdot \sum_{j \in Q} \pricei[j](\vals)
& = \sum_{T \subseteq \alloci(\vals) \backslash Q} \weight_i(T) 
+ \alpha k \cdot \sum_{j \in Q} \frac{1}{\alpha} \cdot \sum_{\substack{T \ni j\\T \subseteq \alloci(\vals)}} \frac{\weight_i(T)}{|T|} \\
&\ge \sum_{T \subseteq \alloci(\vals) \backslash Q} \weight_i(T) 
+ \sum_{j \in Q} \sum_{\substack{T \ni j\\T \subseteq \alloci(\vals)}} \weight_i(T) \\
&\ge \sum_{T \subseteq \alloci(\vals) \backslash Q} \weight_i(T) 
+ \sum_{\substack{T \subseteq \alloci(\vals)\\T \cap Q \neq \emptyset}} \weight_i(T) \\
&= \sum_{T \subseteq \alloci(\vals)} \weight_i(T) \\
&= \vali( \alloci(\vals) )
\end{align*}
where the first inequality follows because $\weight_i(T) > 0$ only for $T$ with $|T| \leq k$, and the second inequality follows by noting that each hyperedge $T$ counted in the second summation must have a non-empty intersection with $Q$ and is counted $|T \cap Q| \geq 1$ times.
\end{proof}


The next Lemma estimates the expected social welfare of a mechanism with posted prices that are close to the ideal $\prices$.

\begin{lemma}
\label{lem:mohk_exists}
Let $\pricesp$ be such that $|\priceip[j] - \pricei[j]| < \delta$ for all $j$.
Then consumption under prices $\prices'$ results in expected welfare of at least $\frac{1}{4k}\E_{\vals \sim \dists}[\SW(\alg(\vals))] - 2m\delta$.
\end{lemma}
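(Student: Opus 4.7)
The plan is to mimic the proof of Lemma~\ref{lem:xos-exists}, using Lemma~\ref{lem:mohk_price} in place of the item-wise additive decomposition that was available in the XOS case. Two new obstacles arise: (i) Lemma~\ref{lem:mohk_price} loses a factor of $\alpha k$ whenever items are deleted from an allocation, which is too large to absorb directly once $k \geq 2$; and (ii) the value of a bundle no longer splits as a sum over items, so the per-item ``$\max$'' trick that made the XOS calculation clean is unavailable. The first obstacle is the main one; I plan to circumvent it by combining the utility-based welfare bound with the trivial bound by revenue.

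For each buyer $i$, the first step is to introduce a fresh independent sample $\walsmi \sim \distsmi$ and let $\soldi(\valsmi,\pi)$ denote the items sold in the mechanism before $i$ arrives; note that this is a function of $\valsmi$ alone and hence independent of $(\vali,\walsmi)$. Writing $S_i := \alloci(\vali,\walsmi) \setminus \soldi(\valsmi,\pi)$ and $Q_i := \alloci(\vali,\walsmi) \cap \soldi(\valsmi,\pi)$, the buyer's utility-maximizing choice gives $\utili(\vals) \geq \vali(S_i) - \sum_{j \in S_i} \priceip[j]$, and applying Lemma~\ref{lem:mohk_price} to the counterfactual allocation $\alloci(\vali,\walsmi)$ with $Q = Q_i$ yields
\[
\utili(\vals) \;\geq\; \vali(\alloci(\vali,\walsmi)) \;-\; \alpha k \sum_{j \in Q_i} \pricei[j](\vali,\walsmi) \;-\; \sum_{j \in S_i} \priceip[j].
\]
Taking expectations, summing over $i$, using independence of $(\vali,\walsmi)$ and $\valsmi$, the identity $\sum_{i=1}^n \Ex[\vals\sim\dists]{\pricei[j](\vals)\Idr{j\in\alloci(\vals)}} = \pricei[j]$, and the bounds $\Prx{j\in\soldi(\valsmi,\pi)} \leq \pi_j := \Prx{j\in\sold(\vals,\pi)}$ and $1 - \Prx{j\in\soldi(\valsmi,\pi)} \leq 1$, one obtains (choosing $\alpha = 2$)
\[
\sum_{i=1}^n \Ex[\vals\sim\dists]{\utili(\vals)} \;\geq\; \Ex[\vals\sim\dists]{\SW(\alg(\vals))} \;-\; 2k\sum_j \pi_j \pricei[j] \;-\; \sum_j \priceip[j].
\]

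Next, I would add the expected revenue $\sum_j \pi_j \priceip[j]$ to this utility bound, use $|\priceip[j]-\pricei[j]|<\delta$, and invoke the identity $\sum_j \pricei[j] = \tfrac{1}{2}\Ex[\vals\sim\dists]{\SW(\alg(\vals))}$ (which follows from $\sum_{j\in\alloci(\vals)}\pricei[j](\vals) = \tfrac{1}{\alpha}\vali(\alloci(\vals))$) to produce a first lower bound on the expected welfare of the form $B := \sum_j \pricei[j]\bigl(1-(2k-1)\pi_j\bigr) - O(m\delta)$. This bound is unfortunately useless when the $\pi_j$'s are close to $1$. The hard part is handling this regime, which I would address via the trivial lower bound $A := \sum_j \pi_j \priceip[j] \geq \sum_j \pi_j \pricei[j] - m\delta$ obtained from welfare $\geq$ revenue. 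Since the welfare is at least both $A$ and $B$, it is at least $\lambda A + (1-\lambda) B$ for any $\lambda \in [0,1]$; choosing $\lambda = \tfrac{2k-1}{2k}$ makes the terms linear in $\pi_j$ cancel exactly, leaving the constant piece $\tfrac{1}{2k}\sum_j \pricei[j] = \tfrac{1}{4k}\Ex[\vals\sim\dists]{\SW(\alg(\vals))}$, which yields the claimed bound up to the stated additive $2m\delta$ error.
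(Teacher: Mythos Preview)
Your proposal is correct and follows essentially the same approach as the paper. Both arguments introduce the independent resample $\walsmi$, apply Lemma~\ref{lem:mohk_price} to the counterfactual allocation $\alloci(\vali,\walsmi)$ with the deleted set being the items already sold, use independence and $\soldi\subseteq\sold$ to factor the expectations, and invoke the identity $\sum_j\pricei[j]=\tfrac{1}{\alpha}\Ex[\vals]{\SW(\alg(\vals))}$. The only difference is cosmetic: the paper moves the term $\alpha k\cdot\Ex{\text{Rev}}$ to the left and then uses $\alpha k\cdot\Ex{\text{util}}\ge\Ex{\text{util}}$ (i.e.\ $\utili\ge 0$) to conclude $\alpha k\cdot\text{Welfare}\ge(1-1/\alpha)\Ex{\SW(\alg)}-2\alpha k\,m\delta$, whereas you phrase the same step as a convex combination of the utility-plus-revenue bound $B$ with the trivial bound $A=\text{Welfare}\ge\text{Rev}$; the choice $\lambda=\tfrac{2k-1}{2k}$ is algebraically identical to multiplying the welfare by $2k$ and absorbing $(2k-1)\text{Rev}$ via $A$.
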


\begin{proof} Given the prices $\pricesp$, let $\pi$ be the (adversarial) order of arrival. We are going to bound the sum of buyers' utilities in expectation over $\dists$. To do so, for each fixed $i$ and $\vals=(\vali,\valsmi)$, we consider another random valuation profile $\walsmi\sim\distsmi$, drawn independently of $\vals$.  Consider also $\alloci(\vali,\walsmi)$, the allocation returned by $\alg$ on the valuation profile $(\vali,\walsmi)$.
Let $S_i(\vali,\valsmi,\walsmi):=\alloci(\vali,\walsmi)\cap\soldi(\vals,\pi)$ be the subset of items in $\alloci(\vali,\walsmi)$ that are already sold when buyer $i$ is selected to make a purchase.  Let $R_i(\vali,\valsmi,\walsmi):=\alloci(\vali,\walsmi)\setminus\soldi(\vals,\pi)$ be the subset of items in $\alloci(\vali,\walsmi)$ that remain unsold at this time.  We note that buyer $i$ could have picked the set $R_i(\vali,\valsmi,\walsmi)$ and, therefore, his utility is at least the utility he would get from this set. Thus we have

\begin{align*}
\utili(\vals) &\ge \Exlong[\walsmi]{\vali(R_i(\vali,\valsmi,\walsmi)) - \sum_{j \in R_i(\vali,\valsmi,\walsmi)}\priceip[j]}\\
& \ge \Exlong[\walsmi]{\vali(R_i(\vali,\valsmi,\walsmi)) - \sum_{j \in X_i(\vali,\walsmi)}\priceip[j]}.
\end{align*}

Applying Lemma \ref{lem:mohk_price} to valuation profile $(\vali,\walsmi)$ and set $Q = \soldi(\vals,\pi)$, we conclude

\begin{align*}
\utili(\vals) & \ge \operatornamewithlimits{\mathbf E}\limits_{\walsmi}\left[
 \alpha\cdot k\cdot \sum_{j \in S_i(\vali,\valsmi,\walsmi)}\pricei[j](\vali,\walsmi) 
- \sum_{j \in X_i(\vali,\walsmi)}\priceip[j]
\right] .
\end{align*}

We now sum over all $i$ and take an expectation over $\vals \sim \dists$ to conclude that

\begin{align}
\label{eq:mohk_1}
\Exlong[\vals]{\sum_i \utili(\vals)} 
& \ge  \sum_i \Exlong[\vals,\walsmi]{\vali(\alloci(\vali,\walsmi)) - \sum_{j \in X_i(\vali,\walsmi)}\priceip[j]} \nonumber\\
& \quad\quad 
- \alpha k \cdot \sum_i \Exlong[\vals,\walsmi]{\sum_{j \in S_i(\vali,\valsmi,\walsmi)}\pricei[j](\vali,\walsmi)}.
\end{align}

Let us analyze separately the two summations on the RHS of \eqref{eq:mohk_1}.  For the first summation, note that $\valsmi$ does not appear in the expression within the expectation.  Thus, by applying a change of variables and then using linearity of expectation, we have
\begin{align}
\sum_i \Exlong[\vals,\walsmi]{\vali(\alloci(\vali,\walsmi)) - \sum_{j \in X_i(\vali,\walsmi)}\priceip[j]} 
& = \sum_i \Exlong[\vals]{\vali(\alloci(\vals)) - \sum_{j \in X_i(\vals)}\priceip[j]} \nonumber \\
& = \Exlong[\vals]{\sum_i \vali(\alloci(\vals))} - \sum_j \priceip[j] \nonumber \\
& \ge \Exlong[\vals]{\sum_i \vali(\alloci(\vals))} - \sum_j \pricei[j] - \delta m \nonumber\\
& = \Exlong[\vals]{\sum_i \vali(\alloci(\vals))} 
- \frac{1}{\alpha}\cdot\sum_j \Exlong[\vals]{\sum_i\sum_{\substack{T\ni j\\T \subseteq \alloci(\vals)}} \frac{\weight_i(T)}{|T|}} \nonumber - \delta m\\
& = \Exlong[\vals]{\sum_i \vali(\alloci(\vals))} - \frac{1}{\alpha} \cdot \Exlong[\vals]{\sum_i \vali(\alloci(\vals))} - \delta m \nonumber \\
& = \left(1 - \frac{1}{\alpha}\right)\Exlong[\vals]{\sum_i \vali(\alloci(\vals))} - \delta m. \label{eq:mohk_2}
\end{align}
Note that the inequality follows from the fact that $|\pricei[j] - \priceip[j]| < \delta$ for each item $j$.

For the second summation on the RHS of \eqref{eq:mohk_1}, we first recall that sets $\soldi$ and $S_i(\vali,\valsmi,\walsmi)$ are defined for the prices $\pricesp$.
Further note that since $\valsmi$ and $\walsmi$ are drawn independently, we have
\begin{align}
\label{eq:mohk_3}
\sum_i \Exlong[\vals,\walsmi]{\sum_{j \in S_i(\vals,\walsmi)}\pricei[j](\vali,\walsmi)} 
& = \sum_i \operatornamewithlimits{\mathbf E}\limits_{\vals,\walsmi}\left[\sum_{j} \pricei[j](\vali,\walsmi) 
\cdot\Idr{j \in \soldi(\valsmi,\pi)} \cdot \Idr{j \in \alloci(\vali,\walsmi)}\right] \nonumber\\
&  = \sum_{i,j} \Exlong[\vali,\walsmi]{\Idr{j \in \alloci(\vali,\walsmi)} \cdot \pricei[j](\vali,\walsmi)} 
\cdot\Prlong[\valsmi]{j \in \soldi(\valsmi,\pi)} \nonumber\\
&  \le \sum_{i,j} \Exlong[\vali,\walsmi]{\Idr{j \in \alloci(\vali,\walsmi)} \cdot \pricei[j](\vali,\walsmi)} 
\cdot\Prlong[\vals]{j \in \sold(\vals,\pi)}\nonumber\\
&  = \sum_j \Prlong[\vals]{j \in \sold(\vals,\pi)} 
\cdot\sum_i \Exlong[\vali,\walsmi]{\Idr{j \in \alloci(\vali,\walsmi)} \cdot \pricei[j](\vali,\walsmi)} \nonumber\\ 
&  = \sum_j \Prlong[\vals]{j \in \sold(\vals,\pi)} \cdot \pricei[j]\nonumber \\
& \le \sum_j \Prlong[\vals]{j \in \sold(\vals,\pi)} \cdot \priceip[j] + \delta m\nonumber\\
&= \Exlong[\vals\sim\dists]{\text{Rev}(\vals,\pi)}+\delta m. 
\end{align}
The first inequality follows because the probability that $j$ is sold before agent $i$ arrives is dominated by the probability that $j$ is sold at all, and the second inequality follows from the fact that $|\pricei[j] - \priceip[j]| < \delta$ for each item $j$. 


Substituting \eqref{eq:mohk_2} and \eqref{eq:mohk_3} into \eqref{eq:mohk_1}, we have

\begin{align}
\label{eq.mohk.4}
\Exlong[\vals]{\sum_i \utili(\vals)} & \ge \left(1 - \frac{1}{\alpha}\right)\Exlong[\vals]{\sum_i \vali(\alloci(\vals))}
-\delta m - \alpha k \cdot \Exlong[\vals\sim\dists]{\text{Rev}(\vals,\pi)} - \alpha k \delta m.
\end{align}

As long as $\alpha k \geq 1$, we can rearrange and conclude
\begin{align*}
\alpha k \left( \Exlong[\vals\sim\dists]{\sum_{i=1}^{n}\utili(\vals)}+ \Exlong[\vals\sim\dists]{\text{Rev}(\vals,\pi)} \right) 
\ge \left( 1 - \frac{1}{\alpha} \right) \Exlong[\vals\sim\dists]{\sum_{i=1}^{n}\vali(\alloci)}- 2\alpha k \delta m.
\end{align*}
Taking $\alpha = 2$, we conclude that the expected welfare of the Posted Pricing Mechanism is within a factor $4k$ of the expected welfare of $\alg$ and small additive error of $2m\delta$, as required.
\end{proof}

We continue with the proof of Theorem \ref{th:mohk_Bayesian_Comp}. Following the same analysis as in Theorem~\ref{th:XOS_Bayesian_Comp} for each item $j$ we can estimate the price $\priceip[j]$ by sampling $t = (\log m + \log n - \log\eps)16m^2 /\eps^2$ valuation profiles, so that $ \Pr[ |\priceip[j] - \pricei[j]| > \eps/4m ] < \eps/mn$. We compute $\pricei[j]'$ for each sample (using algorithm $\alg$ and the \MPHk query oracle) and take the average of all prices seen. Applying a union bound over all $j \in M$ we obtain a guarantee that $|\priceip[j] - \pricei[j]| < \eps/4m$ for all $j$ with probability at least $1-\eps/n$. Now, by setting $\delta = \eps/4m$ in Lemma~\ref{lem:mohk_exists} we have our computed prices $\pricesp$ to generate welfare of at least $\frac{1}{4k}\E_{\vals \sim \dists}[\SW(\alg(\vals))] - \frac{\eps}{2}$ with probability at least $1-\eps/n$.

Finally, we conclude that generated expected welfare is at least
\begin{align*}
\left( \frac{1}{4k}\E_{\vals \sim \dists}[\SW(\alg(\vals))] - \frac{\eps}{2} \right) \left( 1-\frac{\eps}{n} \right) 
> \frac{1}{4k}\E_{\vals \sim \dists}[\SW(\alg(\vals))] - \eps, 
\end{align*}
as required. The last inequality follows, since $\E_{\vals \sim \dists}[\SW(\alg(\vals))]\le \E_{\vals \sim \dists}[\sum_{i=1}^{n}\vali(M)] \leq n$.

\section{Lower bound for $\MOPH$}
\label{app:mph-lower}
\paragraph{Example.}
Suppose there are $m$ identical items in the market and two buyers. Let the first buyer have unit-demand valuation $1$ per item
and the second single-minded buyer have value $m-1$ for the set of all $m$ items and $0$ value for any smaller subset. The optimal
social welfare $\opt$ is $m-1$, where the second buyer is allocated all $m$ items.

Let the seller fix prices on the items. We let the first buyer arrive first. He will buy the
cheapest item, if its price is below $1$. Then the second buyer has $0$ value for the remaining items, which results in a social welfare of $1$.
In the case where each item costs at least $1$, the first buyer purchases nothing but so does the second buyer, as he derives value $m-1$ from the entire set, for a total cost of at least $m$. Therefore, the social welfare in the latter case is $0$. 
We conclude that the social welfare does not exceed $1$ in either of the cases, which gives us the claimed linear gap of $m-1$ with respect to the optimal social welfare.

\end{document}